\newtheorem{lemma}{\bf Lemma}
\newtheorem{theorem}{\bf Theorem}
\newtheorem{assumption}{\bf Assumption}
\newcommand*{\rom}[1]{\expandafter\@slowromancap\romannumeral #1@}
\begin{document}
\title{\textbf{Distributed Bayesian Detection with Byzantine Data}}
\author{Bhavya~Kailkhura,~\IEEEmembership{Student Member,~IEEE}, Yunghsiang~S. Han,~\IEEEmembership{Fellow,~IEEE},
Swastik~Brahma,~\IEEEmembership{Member,~IEEE}, Pramod~K.~Varshney,~\IEEEmembership{Fellow,~IEEE}
\thanks{B. Kailkhura, S. Brahma and P. K. Varshney are with Department of EECS, Syracuse University, Syracuse, NY 13244. (email: bkailkhu@syr.edu; skbrahma@syr.edu; varshney@syr.edu)}
\thanks{Y. S. Han is with EE Department, National Taiwan University of Science and Technology, Taiwan, R. O. C. (email: yshan@mail.ntust.edu.tw)}}
\date{}
\maketitle
\begin{abstract} 
In this paper, we consider the problem of distributed Bayesian detection in the presence of Byzantines in the network. 
It is assumed that a fraction of the
nodes in the network are compromised and reprogrammed by an adversary to transmit
false information to the fusion center (FC) to degrade detection performance. 
The problem of distributed detection is formulated as a binary hypothesis test at the FC based on 1-bit data sent by the sensors.
The expression for minimum attacking power required by the Byzantines to blind the FC is obtained. More specifically, we show that
above a certain fraction of Byzantine attackers in the network, the detection scheme becomes completely incapable of utilizing the sensor data for detection.
We analyze the problem under different attacking scenarios and derive results for different non-asymptotic cases. It is found that existing asymptotics-based results do not
hold under several non-asymptotic scenarios.
When the fraction of Byzantines is not sufficient to blind the FC, we also provide closed form expressions for the optimal attacking strategies for the Byzantines that most degrade the detection performance.
\end{abstract}
\begin{keywords}
Bayesian detection, Data falsification, Byzantine Data, Probability of error, Distributed detection
\end{keywords}

\section{Introduction}
Distributed detection is a well studied topic in the detection
theory literature~\cite{Varshney, Viswanathan, veer}. 
In distributed detection systems, due to bandwidth and energy constraints, the nodes often make a 1-bit local decision regarding the presence of a phenomenon before sending it to the fusion center (FC). Based on the local decisions transmitted by the nodes, the FC makes a global decision about the presence of the phenomenon of interest. Distributed detection was originally motivated by its applications in
military surveillance but is now being employed in a wide variety of applications such as distributed spectrum sensing (DSS) using cognitive radio networks (CRNs) and traffic and environment monitoring.

In many applications, a large number of inexpensive and less reliable nodes that can provide dense coverage are used
to provide a balance between cost and functionality.
The performance of such systems strongly depends on
the reliability of the nodes in the network. The robustness of
distributed detection systems against attacks is of utmost importance. The
distributed nature of such systems makes them quite vulnerable
to different types of attacks. 
In recent years, security issues of such
distributed networks are increasingly being studied within the networking~\cite{a4}, signal processing~\cite{a3} and information theory communities~\cite{a5}. One typical attack on such networks is a Byzantine attack. While Byzantine attacks (originally proposed by \cite{Lamport}) may, in general, refer to many types of malicious behavior, our focus in this paper is on data-falsification attacks ~\cite{avemp,frag, Rifa, Marano, Rawat, Kailkhura2013, Kailkhura, aditya}. In this type of attack, an attacker may send false (erroneous) data to the FC to degrade detection performance. In this paper, we refer to such a data falsification attacker as a Byzantine and the data thus generated is referred to as Byzantine data.

We formulate the signal detection problem as a binary hypothesis testing problem with the two hypotheses $H_{0}$ (signal is absent) and $H_{1}$ (signal is present).
We make the conditional i.i.d. assumption under which observations at the nodes are conditionally independent and identically
distributed given the hypothesis.
We assume that the FC is not compromised,
and is able to collect data from all the nodes in the network via error free communication channels.\footnote{In this work, we do not consider  how individual nodes deliver their data to the fusion center except that the Byzantines are not able to alter the transmissions of honest nodes.}   
We also assume that the FC does not know which node
is Byzantine, but it knows the fraction of Byzantines in the network.\footnote{In practice, the fraction of Byzantines in the network can be learned by observing the data sent by the nodes at the FC over a time window; however, this study is beyond the scope of this work.} We consider the problem of distributed Bayesian detection with prior probabilities of hypotheses known to both the FC and the attacker. The FC aims to minimize the probability of error by choosing the optimal fusion rule.
\subsection{Related Work}
Although distributed detection has been a very active field of research in the past, security problems in distributed detection
networks gained attention only very recently. In \cite{Marano}, the authors considered the problem of distributed detection in the presence of Byzantines under the Neyman-Pearson (NP) setup and determined the optimal attacking strategy which minimizes the detection error exponent. This approach based on Kullback-Leibler divergence (KLD) is analytically tractable and yields approximate results in non-asymptotic cases. They also assumed that the Byzantines know the true hypothesis, which obviously is not satisfied in practice but does provide a bound. 
In \cite{Rawat}, the authors analyzed the same problem in the context of collaborative spectrum sensing under Byzantine Attacks. They relaxed the  assumption of perfect knowledge of the hypotheses by assuming that the Byzantines
determine the knowledge about the true hypotheses from their own sensing observations. A variant of the above formulation was explored in \cite{Kailkhura2013,bhavyaj}, where the authors addressed the problem of optimal Byzantine attacks (data falsification) on
distributed detection for a tree-based topology and extended the results of ~\cite{Rawat} for tree topologies. By assuming
that the cost of compromising nodes at different levels of the tree is different, they found the optimal Byzantine
strategy that minimizes the cost of attacking a given tree. Schemes for Byzantine node identification have
been proposed in ~\cite{aditya,Rawat,a1,a2}.
Our focus is considerably different from Byzantine node identification schemes in that we do not try to
authenticate the data; we consider most effective attacking
strategies and distributed detection schemes that are robust against attacks.

\subsection{Main Contributions}
All the approaches discussed so far consider distributed detection under the Neyman-Pearson (NP) setup. In this paper, we consider the distributed Bayesian detection problems with known prior probabilities of hypotheses. We assume that the Byzantines do not have perfect knowledge about the true state of the phenomenon of interest. In addition, we also assume
that the Byzantines neither have the knowledge nor control over the thresholds used to make local decisions at the nodes. Also, the probability of detection and the probability of false alarm of a node are assumed to be the same for every node irrespective of whether they are honest or Byzantines.
In our earlier work \cite{bhavyaarxiv} on this problem,
we analyzed the problem in the asymptotic regime. Adopting Chernoff information as our performance metric, we studied the performance of a distributed detection system with Byzantines in the asymptotic regime. We summarize our results in the following theorem.
\begin{theorem}[\cite{bhavyaarxiv}]
\label{th1}
Optimal attacking strategies, $(P_{1,0}^*, P_{0,1}^*)$, which minimize the Chernoff information are
\[ (P_{1,0}^*, P_{0,1}^*)  \left\{ \begin{array}{rll}
				(p_{1,0}, p_{0,1})  & \mbox{if}\ \alpha \geq 0.5 \\
			   (1,1)  & \mbox{if}\ \alpha < 0.5
				\end{array}\right. ,
\] 
where, $(p_{1,0}, p_{0,1})$ satisfy $\alpha(p_{1,0}+p_{0,1})=1$.
\end{theorem}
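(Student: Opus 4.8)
The plan is to reduce the minimization to a question about a single transmitted bit and then split on the value of $\alpha$. Write $P_d$ and $P_f$ for the (common) local probability of detection and false alarm of a node, and assume $P_d>P_f$ (otherwise relabel the bit). Since honest nodes report truthfully while each Byzantine flips its local decision according to $(P_{1,0},P_{0,1})$, the bit $u$ received from a generic node is Bernoulli, with parameters $\rho_0=\Pr(u=1\mid H_0)$ and $\rho_1=\Pr(u=1\mid H_1)$ that are affine in $(P_{1,0},P_{0,1})$, and a short computation gives the key identity
\[ \rho_1-\rho_0=(P_d-P_f)\,\bigl(1-\alpha(P_{1,0}+P_{0,1})\bigr). \]
Because the received bits are conditionally i.i.d., the Chernoff information of the full $n$-sensor data equals $n\,C(\rho_0,\rho_1)$, where $C(\cdot,\cdot)$ denotes the single-bit Chernoff information between two Bernoulli laws; hence it suffices to minimize $C(\rho_0,\rho_1)$ over $(P_{1,0},P_{0,1})\in[0,1]^2$.

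For $\alpha\ge 0.5$, recall $C(\rho_0,\rho_1)\ge 0$ with equality if and only if $\rho_0=\rho_1$. By the identity above and $P_d\ne P_f$, equality holds exactly when $\alpha(P_{1,0}+P_{0,1})=1$; since $1/\alpha\le 2$, the line $P_{1,0}+P_{0,1}=1/\alpha$ meets the unit square, so the global minimum value $0$ is attained precisely on the set $\{(p_{1,0},p_{0,1}):\alpha(p_{1,0}+p_{0,1})=1\}$. This is the first branch.

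For $\alpha<0.5$ one always has $\alpha(P_{1,0}+P_{0,1})\le 2\alpha<1$, so $\rho_0\ne\rho_1$ and $C(\rho_0,\rho_1)>0$ for every strategy. I would prove that $(P_{1,0},P_{0,1})=(1,1)$ is optimal by a data-processing argument: construct a binary channel that, applied to the received bit under an \emph{arbitrary} strategy, turns it into the received bit under the all-flip strategy. Let $\rho_i^{\star}=(1-2\alpha)P_i+\alpha$ be the value of $\rho_i$ at $(P_{1,0},P_{0,1})=(1,1)$. Solving the $2\times 2$ system $c\rho_i+d=\rho_i^{\star}$, $i=0,1$, for an affine map $T(\rho)=c\rho+d$ gives $c=(1-2\alpha)/\bigl(1-\alpha(P_{1,0}+P_{0,1})\bigr)$ together with a matching $d$; a direct check (using $\alpha<1/2$ and $0\le P_{1,0},P_{0,1}\le 1$) shows $c\in(0,1]$, $d\ge 0$ and $c+d\le 1$, so $T$ is a legitimate binary channel, and it reduces to the identity only at $(1,1)$. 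Applying $T$ to $\mathrm{Bern}(\rho_0)$ and $\mathrm{Bern}(\rho_1)$ yields $\mathrm{Bern}(\rho_0^{\star})$ and $\mathrm{Bern}(\rho_1^{\star})$, so the data-processing inequality for Chernoff information gives $C(\rho_0^{\star},\rho_1^{\star})\le C(\rho_0,\rho_1)$ for all strategies, with strict inequality unless the strategy already equals $(1,1)$. Hence the unique minimizer is $(1,1)$, the second branch.

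The delicate part is the case $\alpha<0.5$. A naive attempt to show $C(\rho_0,\rho_1)$ is monotone in each of $P_{1,0},P_{0,1}$ fails: increasing either parameter shrinks the gap $|\rho_1-\rho_0|$ (which lowers $C$) but can simultaneously drag $\rho_0,\rho_1$ away from $1/2$ (which raises $C$), so one must show the first effect always dominates. The data-processing route packages this cleanly, at the cost of the routine but slightly fiddly verification that the constructed channel parameters are genuine probabilities over the whole feasible region; a calculus-based alternative -- reparametrize by $u=\alpha(P_{1,0}+P_{0,1})$ and $v=\alpha P_{1,0}$ so that $\rho_i=P_i(1-u)+v$, then analyze $C$ as a function of $(u,v)$ -- would also work but is messier.
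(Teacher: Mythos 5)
Your argument is correct, but note that the paper itself gives no proof of this statement: Theorem~1 is imported verbatim from the authors' earlier work \cite{bhavyaarxiv}, and everything this paper proves about the Chernoff-information regime (the derivative-sign analyses in Lemmas~2, 3 and 7) concerns the error probability, not the Chernoff information. Your reduction is sound: the received bits are conditionally i.i.d.\ Bernoulli with parameters $\pi_{1,0},\pi_{1,1}$ affine in $(P_{1,0},P_{0,1})$, the identity $\pi_{1,1}-\pi_{1,0}=(P_d-P_f)(1-\alpha(P_{1,0}+P_{0,1}))$ checks out, additivity of Chernoff information over i.i.d.\ observations reduces the problem to a single bit, and the $\alpha\ge 0.5$ branch follows immediately from $C=0\Leftrightarrow \pi_{1,0}=\pi_{1,1}$ together with feasibility of the line $P_{1,0}+P_{0,1}=1/\alpha$ inside the unit square. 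For $\alpha<0.5$ your degradation argument also works: I verified that $d(1-\alpha s)=\alpha\bigl(1-\alpha s-(1-2\alpha)P_{1,0}\bigr)\ge\alpha(1-\alpha)(1-P_{1,0})\ge 0$ and $(1-c-d)(1-\alpha s)=\alpha\bigl(1-(1-\alpha)P_{0,1}-\alpha P_{1,0}\bigr)\ge 0$ with $s=P_{1,0}+P_{0,1}$, so the constructed map is a genuine binary channel, and the data-processing inequality for Chernoff information (valid since each R\'enyi integral $\sum_x P^\lambda Q^{1-\lambda}$, $\lambda\in(0,1)$, is monotone under stochastic maps) gives the claim. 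This route is genuinely different in flavor from the calculus-based, sign-of-derivative style used throughout this paper and presumably in \cite{bhavyaarxiv}; what it buys is that it sidesteps exactly the difficulty you identify, namely that increasing a flip probability both shrinks $|\pi_{1,1}-\pi_{1,0}|$ and shifts the pair, so termwise monotonicity is not obvious. Two cosmetic cautions: your $\rho_i^\star=(1-2\alpha)P_i+\alpha$ reuses $P_0,P_1$, which the paper reserves for the priors (you mean $P_f,P_d$); and the \emph{strictness} of the data-processing inequality away from $(1,1)$ is asserted rather than proved --- it is not needed for the theorem as stated, but if you want uniqueness you should justify it (e.g., strict concavity of $(p,q)\mapsto p^\lambda q^{1-\lambda}$ plus $c\in(0,1)$).
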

 \begin{table}
 \begin{center}
 \caption{Different scenarios based on the knowledge of the opponent's strategies}
 \renewcommand{\arraystretch}{2}
 \resizebox{10cm}{!} {
   \begin{tabular}{ c | c | c }
     \hline
       Cases  & Attacker has the knowledge of the FC's strategies  & FC has the knowledge of Attacker's strategies \\ \hline
      Case 1 & No & No \\
      Case 2 & Yes & No \\
      Case 3 & Yes & Yes \\
      Case 4 & No & Yes \\
      \hline
   \end{tabular}
 }
 \label{table10}
 \end{center}
 \end{table}

In our current work,
we significantly extend our previous work and focus on a \textit{non-asymptotic} analysis for the Byzantine attacks on distributed Bayesian detection.
First, we show that above a certain fraction of Byzantines in the network, the data fusion scheme becomes completely incapable (blind) and it is not possible to design a decision rule at the FC that can perform better than the decision rule based just on prior information. We find the minimum fraction of Byzantines that can blind the FC  and refer to it as the \textit{critical power}. Next, we explore the optimal attacking strategies for the Byzantines under different scenarios.
In practice, the FC and the Byzantines will optimize their utility by choosing their actions based on the knowledge of their opponent's behavior. This motivates us to address the question: what are the optimal attacking/defense strategies given the knowledge of the opponent's strategies? Study of these practically
motivated questions requires non asymptotic analysis, which is systematically studied in this work.
By assuming the error probability to be our performance metric, we analyze the problem in the non asymptotic
regime. Observe that, the probability of error is a function of the fusion rule, which is under the control
of the FC. This
gives us an additional degree of freedom to analyze the Byzantine attack under different practical scenarios
where the FC and the Byzantines may or may not have knowledge of their opponent's strategies (For a description of different scenarios see Table~\ref{table10}). 
It is found that results based on asymptotics do not hold under several non-asymptotic scenarios.
More specifically, when the FC does not have knowledge of attacker's strategies, results for the non-asymptotic case are different from those for the asymptotic case. However, if the FC has complete knowledge of the attacker's strategies and uses the optimal fusion rule to make the global decision, results obtained for this case are the same as those for the asymptotic case. 
Knowledge of the behavior of the attacker in the non-asymptotic regime enables the analysis of many related questions, such as the design of the optimal detector (fusion rule) and effects of strategic interaction between the FC and the attacker. In the process of analyzing the scenario where the FC has complete knowledge of its opponent's strategies, we obtain a closed form expression of the optimal fusion rule.
To summarize, our main contributions are threefold. 
\begin{itemize}
\item In contrast to previous works, we study the problem of distributed detection with Byzantine data in the Bayesian framework.
\item We analyze the problem under different attacking scenarios and derive
closed form expressions for optimal attacking strategies for different non-asymptotic cases.
\item In the process of analyzing the scenario where the FC has complete knowledge of its
opponent's strategies, we obtain a closed form expression for the optimal fusion rule.
\end{itemize}

The signal processing problem considered in this paper is
closest to~\cite{Rawat}. The approach in~\cite{Rawat}, based on Kullback-Leibler divergence (KLD), is analytically tractable and yields approximate results in non-asymptotic cases.
Our results, however, are not a direct application of those
of~\cite{Rawat}. While as in~\cite{Rawat} we
are also interested in the optimal attack strategies, our objective function and, therefore, techniques of finding them are different. 
In contrast to~\cite{Rawat}, where only optimal strategies to blind the FC were obtained, we also provide closed form expressions for the optimal attacking strategies for the Byzantines that most degrade the detection performance when the fraction of Byzantines is not sufficient to blind the FC. 
In fact, finding the optimal Byzantine attacking strategies is only the first step toward designing a robust distributed detection system. Knowledge of these attacking strategies can be used to implement the optimal detector at the FC or to implement an efficient reputation based identification scheme~\cite{Rawat,covert} ( thresholds in these schemes are generally a function of attack strategies).
Also, the optimal attacking distributions in certain cases have the minimax property and, therefore, the knowledge of these optimal attack strategies can be used to implement the robust detector.

The rest of the paper is organized as follows.
Section~\ref{sec2} introduces our system model, including the Byzantine attack model.
In Section~\ref{sec3}, we provide the closed form expression for the critical power above which the FC becomes blind.
Next, we discuss our results based on non-asymptotic analysis of the distributed Bayesian detection system with Byzantine data for different scenarios.
In Section~\ref{sec5}, we analyze the problem when Byzantines do not have any knowledge about the fusion rule used at the FC.
Section~\ref{majority} discusses the scenario where Byzantines have the knowledge about the fusion rule used at the FC, but the FC does not know the attacker's strategies.
Next in Section~\ref{sec6}, we extend our analysis to the scenario where both the FC and the attacker have the knowledge of their opponent's strategies and act strategically to optimize their utilities.
Finally, Section~\ref{sec7} concludes the paper.

\begin{figure}[t!]
  \centering
    \includegraphics[height=0.25\textheight, width=0.4\textwidth]{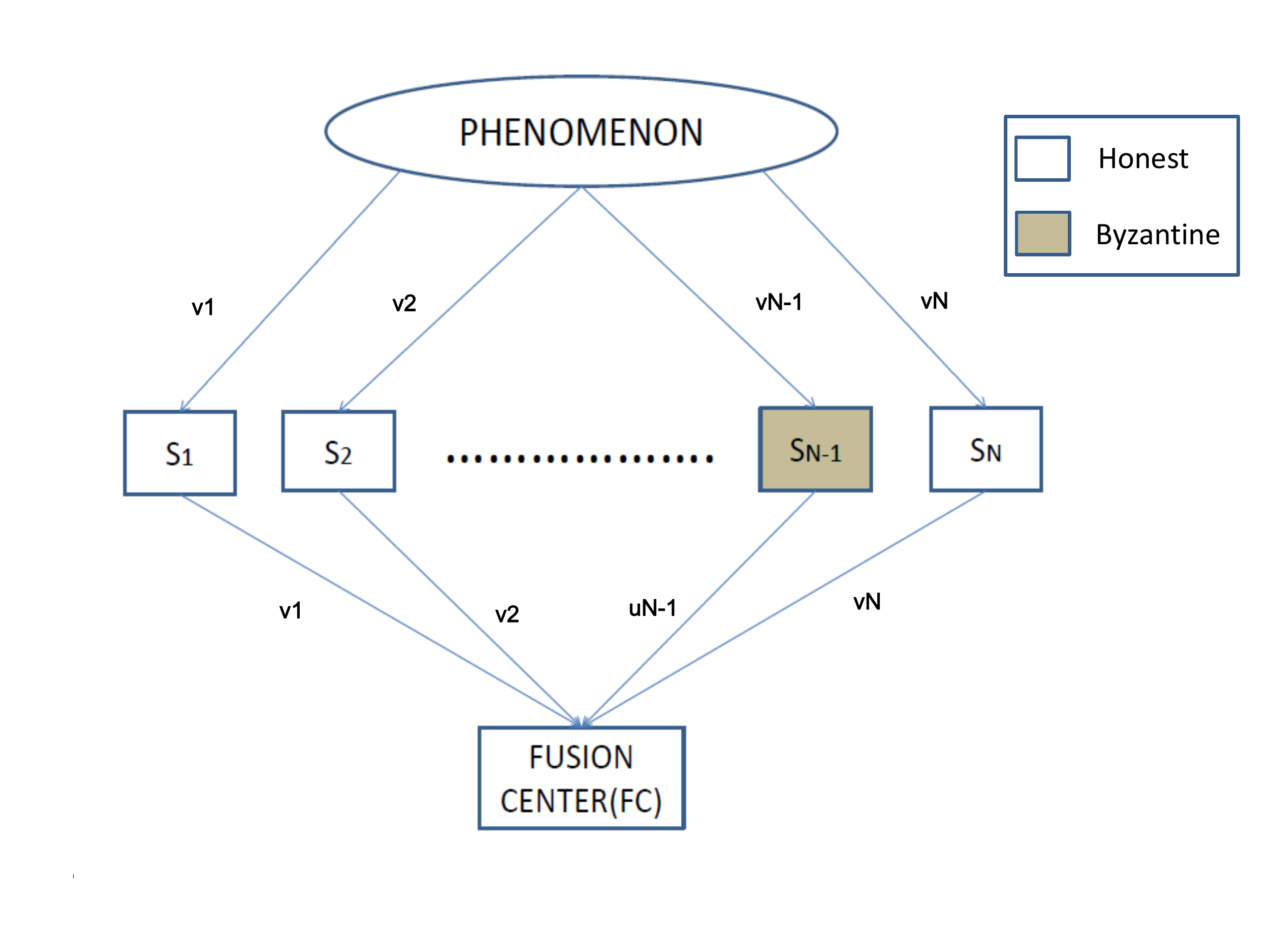}
    \caption{System Model}
    \label{model}
\end{figure}

\section{Distributed detection in the presence of Byzantines}
\label{sec2}
Consider two hypotheses 
$H_{0}$ (signal is absent)
and $H_{1}$ (signal is present).
Also, consider a parallel network (see Figure~\ref{model}), comprised of a central entity (known as the Fusion Center (FC))
and a set of $N$ sensors (nodes), which
faces the task of determining which of the two hypotheses is true.
Prior probabilities of the two hypotheses $H_{0}$ and $H_{1}$
are denoted by $P_{0}$ and $P_{1}$, respectively. 
The sensors observe 
the phenomenon, carry out local computations to decide the presence or
absence of the phenomenon, 
and then send their local decisions to the FC that yields a final decision
after processing the local decisions.
Observations at the nodes are assumed to be
conditionally independent and identically distributed given the hypothesis.
A Byzantine attack on such a system compromises some of the nodes
which may then intentionally send falsified local decisions to the FC to make the final
decision incorrect.
We assume that a fraction $\alpha$ of the $N$ nodes which observe the phenomenon
have been compromised by an attacker.
We consider the communication channels to be error-free.
Next, we describe the modus operandi 
of the sensors and the FC in detail.

\subsection{Modus Operandi of the Nodes}
\label{attack model}
Based on the observations,
each node $i$ 
makes a one-bit local decision $v_{i} \in \{ 0,1\}$ regarding the absence or presence of the phenomenon
using the likelihood ratio test
    \begin{equation}
        \label{eqn1}
        \dfrac{p_{Yi}^{(1)}(y_{i})}{p_{Yi}^{(0)}(y_{i})} \quad \mathop{\stackrel{v_i = 1}{\gtrless}}_{v_i = 0} \quad \lambda,
    \end{equation}
    where $\lambda$ 
    is the identical threshold\footnote{It has been shown that the use of identical thresholds 
    is asymptotically optimal~\cite{tsit}.}
    used at all the sensors and $p_{Yi}^{(k)} (y_{i})$ is the conditional probability density function (PDF) of
    observation $y_i$ under the hypothesis $H_k$.
%
Each node $i$, after making its one-bit local decision $v_{i}$,
sends $u_i\in\{0,1\}$ to the FC, where $u_i = v_i$
if $i$ is an uncompromised (honest) node, but for a compromised (Byzantine) node $i$,
$u_i$ need not be equal to $v_i$.
We denote the probabilities of detection and false alarm of each
node $i$ in the network by $P_{d}=P(v_{i}=1|H_{1})$
and $P_{f}=P(v_{i}=1|H_{0})$, respectively,
which hold for both
uncompromised nodes as well as
compromised nodes.
In this paper, we assume that each Byzantine decides to attack independently
relying on its own observation and decision regarding
the presence of the phenomenon.
Specifically, we define the following strategies $P_{j,1}^H$, $P_{j,0}^H$ and $P_{j,1}^B$, $P_{j,0}^B$ ($j \in \{0,1\}$)
for the honest and Byzantine nodes, respectively:\\
\vspace{-0.04in}
Honest nodes:
\begin{equation}
\label{honest}
P_{1,1}^H=1-P_{0,1}^H=P^{H}(x=1|y=1)=1
\end{equation}
\begin{equation}
P_{1,0}^H=1-P_{0,0}^H=P^{H}(x=1|y=0)=0
\end{equation}
				
\noindent
Byzantine nodes:
\begin{equation}
P_{1,1}^B=1-P_{0,1}^B=P^{B}(x=1|y=1)
\end{equation}
\begin{equation}
P_{1,0}^B=1-P_{0,0}^B=P^{B}(x=1|y=0)
\end{equation}
$P^{H}(x=a|y=b)$ ($P^{B}(x=a|y=b)$) is the probability that an honest (Byzantine) node sends $a$ to the FC
when its actual local decision is $b$. From now
onwards, we will refer to Byzantine flipping probabilities simply by $(P_{1,0}, P_{0,1})$. We also assume that the FC is not aware of the exact set of Byzantine nodes and considers each node $i$ to be Byzantine with a certain probability $\alpha$.

\subsection{Binary Hypothesis Testing at the Fusion Center}
\label{Testing}
We consider a Bayesian detection problem where the
performance criterion at the FC is the probability of error.
The FC receives decision vector, $\mathbf{u}=[u_1,\cdots,u_N]$, from the nodes and makes the global decision about the phenomenon by considering the maximum a \textit{posteriori} probability (MAP) rule which is given by
\begin{equation*}
P(H_1|\mathbf{u}) \quad \mathop{\stackrel{H_1}{\gtrless}}_{H_0} \quad  P(H_0|\mathbf{u})
\end{equation*}
 or equivalently,
\begin{equation*}
\dfrac{P(\mathbf{u}|H_1)}{P(\mathbf{u}|H_0)} \quad \mathop{\stackrel{H_1}{\gtrless}}_{H_0} \quad  \dfrac{P_0}{P_1}.
\end{equation*}
Since the $u_i$’s are independent of each other, the MAP rule simplifies to
a $K$-out-of-$N$ fusion rule~\cite{Varshney}.
The global false alarm probability $Q_F$ and detection probability $Q_D$ are then given by\footnote{These expressions are valid under the assumption that $\alpha<0.5$. Later in Section~\ref{sec6}, we will generalize our result for any arbitrary $\alpha$.}

    \newcommand{\nchoosek}[2]{\left(\begin{array}{c}#1\\#2\end{array}\right)}
    \vspace*{-0.1in}
    \begin{equation}
    \label{qf}
     Q_{F} = \sum_{i = K}^{N} \nchoosek{N}{i} (\pi_{1,0})^i (1-\pi_{1,0})^{N-i}
     \end{equation}
and
     \begin{equation}
     \label{qd}
     Q_{D} = \sum_{i = K}^{N} \nchoosek{N}{i} (\pi_{1,1})^i (1-\pi_{1,1})^{N-i},
    \end{equation}
  where $\pi_{j0}$ and $\pi_{j1}$ are the conditional probabilities of $u_i=j$ given $H_0$ and $H_1$, respectively.
 Specifically, $\pi_{1,0}$ and $\pi_{1,1}$ can be calculated as
   \begin{equation}
   \label{equ1}
\pi_{1,0}=\alpha(P_{1,0}(1-P_f)+(1-P_{0,1})P_f)+(1-\alpha)P_f
\end{equation}
and
\begin{equation}
\label{equ2}
\pi_{1,1}=\alpha(P_{1,0}(1-P_d)+(1-P_{0,1})P_d)+(1-\alpha)P_d,
\end{equation}
where $\alpha$ is the fraction of Byzantine nodes.

The local probability of error as seen by the FC is defined as
\begin{equation}
P_e = P_{0} \pi_{1,0} + P_{1} \left( 1-\pi_{1,1} \right)
\end{equation}
and the system wide probability of error at the FC is given by
\begin{equation}
P_E = P_{0} Q_F + P_{1} \left( 1-Q_D \right).
\end{equation}
Notice that, the system wide probability of error $P_E$ is a function of the parameter $K$, which is under the control of the FC, and the parameters $(\alpha,P_{j,0},P_{j,1})$ are under the control of the attacker.

The FC and the Byzantines may or may not have knowledge of their opponent's strategy. We will analyze the problem of detection with Byzantine data under several different scenarios in the following sections. First, we will determine the minimum fraction of Byzantines needed to blind the decision fusion scheme.


\section{Critical Power to Blind the fusion Center}
\label{sec3}
In this section, we determine the minimum fraction of Byzantine nodes needed to make the FC ``blind'' and denote it by $\alpha_{blind}$. We say that the FC is blind if an adversary can make the data that the FC receives from the sensors such that no information is conveyed. In other words, the optimal detector at the FC cannot perform better than simply making the decision based on priors. 
\begin{lemma}
In Bayesian distributed detection, the minimum fraction of Byzantines needed to make the FC blind is $\alpha_{blind}=0.5$. 
\end{lemma}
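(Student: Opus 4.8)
The plan is to reduce ``blindness'' to a single scalar identity on the attacker's parameters. Since the $u_i$ are conditionally i.i.d.\ Bernoulli bits whose laws under $H_0$ and $H_1$ depend on the hypothesis only through $\pi_{1,0}$ and $\pi_{1,1}$ (see (\ref{equ1}) and (\ref{equ2})), I would first show that the FC is blind if and only if $\pi_{1,0}=\pi_{1,1}$. If $\pi_{1,0}=\pi_{1,1}$, then $P(\mathbf{u}|H_1)=P(\mathbf{u}|H_0)$ for every received vector $\mathbf{u}$, the likelihood ratio in Section~\ref{Testing} is identically one, and the MAP rule collapses to a decision based only on the priors $P_0,P_1$ --- i.e.\ the data convey no information. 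Conversely, if $\pi_{1,0}\neq\pi_{1,1}$, the received bits carry strictly positive discriminating information and an appropriate (nontrivial) $K$-out-of-$N$ rule strictly outperforms the prior-only decision, so the FC is not blind.

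Second, I would solve $\pi_{1,0}=\pi_{1,1}$ for the attack parameters. Writing $s=P_{1,0}+P_{0,1}$ and collecting terms in (\ref{equ1}) and (\ref{equ2}) gives $\pi_{1,0}=\alpha P_{1,0}+P_f(1-\alpha s)$ and $\pi_{1,1}=\alpha P_{1,0}+P_d(1-\alpha s)$, so that $\pi_{1,1}-\pi_{1,0}=(P_d-P_f)\bigl(1-\alpha s\bigr)$. Since the local likelihood ratio test (\ref{eqn1}) is informative, $P_d\neq P_f$, and therefore the FC is blind exactly when $1-\alpha(P_{1,0}+P_{0,1})=0$, i.e.\ $\alpha(P_{1,0}+P_{0,1})=1$.

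Third, I would optimize over the attacker's admissible choices $(P_{1,0},P_{0,1})\in[0,1]^2$. Because $P_{1,0}+P_{0,1}\le 2$, the equation $\alpha(P_{1,0}+P_{0,1})=1$ has a solution in $[0,1]^2$ if and only if $\alpha\ge 1/2$; when $\alpha\ge 1/2$ the attacker realizes it, e.g.\ by choosing $P_{1,0}+P_{0,1}=1/\alpha$ (in particular $P_{1,0}=P_{0,1}=1$ when $\alpha=1/2$), while for $\alpha<1/2$ no admissible flipping probabilities can blind the FC. Hence the smallest Byzantine fraction that blinds the FC is $\alpha_{blind}=0.5$. (This is consistent with Theorem~\ref{th1}, whose optimal attack for $\alpha\ge 0.5$ is precisely any $(p_{1,0},p_{0,1})$ with $\alpha(p_{1,0}+p_{0,1})=1$.)

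I expect the main obstacle to be the ``only if'' part of the first step: rigorously justifying that $\pi_{1,0}\neq\pi_{1,1}$ implies some fusion rule beats the prior-only decision. The cleanest route is to show that the Bayes-optimal error probability at the FC is a strictly decreasing function of $|\pi_{1,1}-\pi_{1,0}|$ (equivalently, that with two distinct Bernoulli parameters the observation is not Blackwell-dominated by the trivial experiment), which is a short argument but the only non-algebraic ingredient; everything else in the lemma is the elementary manipulation and feasibility check above.
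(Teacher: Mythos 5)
Your proposal is correct and follows essentially the same route as the paper: reduce blindness to $\pi_{1,0}=\pi_{1,1}$, derive $\alpha(P_{1,0}+P_{0,1})=1$, and then observe that feasibility over $(P_{1,0},P_{0,1})\in[0,1]^2$ forces $\alpha\ge 0.5$, attained at $P_{1,0}=P_{0,1}=1$. If anything, you are more careful than the paper, which asserts the equivalence of blindness with $\pi_{1,1}=\pi_{1,0}$ without proving the converse direction that you rightly flag as the only non-algebraic ingredient.
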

\begin{proof}
In the Bayesian framework, we say that the FC is ``blind'', if the received data $\mathbf{u}$ does not provide any information about the hypotheses to the FC. That is, the condition to make the FC blind can be stated as
\begin{equation}
\label{initial-blind}
P(H_i|\mathbf{u})=P(H_i)\; \text{for}\; i=0,1.
\end{equation}
It can be seen that \eqref{initial-blind} is equivalent to
\begin{eqnarray*}
&&
P(H_i|\mathbf{u})=P(H_i)\\
&\Leftrightarrow&
\dfrac{P(H_i)P(\mathbf{u}|H_i)}{P(\mathbf{u})}=P(H_i)\\
&\Leftrightarrow&
P(\mathbf{u}|H_i)=P(\mathbf{u}).
\end{eqnarray*}
Thus, the FC becomes blind if the probability of receiving a given vector $\mathbf{u}$ is independent of the hypothesis present. In such a scenario, the best that the FC can do is to make decisions solely based on the priors, resulting in the most degraded performance
at the FC. Now, using the conditional i.i.d. assumption, under which observations at the nodes are conditionally independent and identically
distributed given the hypothesis, condition \eqref{initial-blind} to make the FC blind becomes $\pi_{1,1}=\pi_{1,0}$. This is true only when
\begin{equation*}
\alpha[P_{1,0}(P_f-P_d)+(1-P_{0,1})(P_d-P_f)]+(1-\alpha)(P_d-P_f)=0.
\end{equation*}
Hence, the FC becomes blind if
\begin{equation}
\label{blind}
\alpha=\dfrac{1}{(P_{1,0}+P_{0,1})}.
\end{equation}
$\alpha$ in \eqref{blind} is minimized when $P_{1,0}$  and $P_{0,1}$ both take their largest values, i.e., $P_{1,0}=P_{0,1}=1$.
Hence, $\alpha_{blind}=0.5$.
\end{proof}

Next, we investigate how the Byzantines can launch an attack optimally considering that the parameter $(K)$ is under the control of the FC. By assuming error probability to be our performance metric, we analyze the non-asymptotic regime. Observe that the probability of error is dependent on the fusion rule. This gives us an additional degree of freedom to analyze the Byzantine attack under different scenarios where the FC and the
Byzantines may or may not have knowledge of their opponent's strategies.

\section{Optimal Attacking Strategies without the knowledge of Fusion Rule}
\label{sec5}
In practice, the Byzantine attacker may not have the knowledge about the fusion rule, i.e., the value of $K$, used by the FC. In such scenarios, we obtain the optimal attacking strategy for Byzantines by maximizing the local probability of error as seen by the FC, which is independent of the fusion rule $K$. We formally state the problem as

\begin{equation*}
\begin{aligned}
& \underset{P_{1,0},P_{0,1}}{\text{maximize}}
& & P_0 \pi_{1,0}+P_1(1-\pi_{1,1}) \\
& \text{subject to}
& &  0 \leq P_{1,0}\leq 1\\
& & & 0 \leq P_{0,1}\leq 1\\
\end{aligned}
\tag{P1}\label{opt-P2}
\end{equation*}
To solve the problem, we analyze the properties of the objective function, $P_e=P_0 \pi_{1,0}+P_1(1-\pi_{1,1})$, with respect to $(P_{1,0},P_{0,1})$. Notice that
\begin{equation}
\dfrac{dP_e}{P_{1,0}}=P_0 \alpha (1-P_f)-P_1 \alpha (1-P_d)\label{n1}
\end{equation}
and
\begin{equation}
\dfrac{dP_e}{P_{0,1}}=-P_0\alpha P_f+P_1\alpha P_d. \label{n2}
\end{equation} 
By utilizing monotonicity properties of the objective function with respect to $P_{1,0}$ and $P_{0,1}$ (\eqref{n1} and \eqref{n2}), we present the solution of the Problem~\ref{opt-P2} in Table~\ref{tableop}. 
Notice that, when $\frac{P_d}{P_f}<\frac{P_0}{P_1}<\frac{1-P_d}{1-P_f}$, both \eqref{n1} and \eqref{n2} are less than zero. $P_e$  then becomes a strictly decreasing function of $P_{1,0}$ as well as $P_{0,1}$. Hence, to maximize $P_e$, the attacker needs to choose $(P_{1,0},P_{0,1})=(0,0)$. However, the condition $\frac{P_d}{P_f}<\frac{P_0}{P_1}<\frac{1-P_d}{1-P_f}$ holds iff $P_d<P_f$ and, therefore, is not admissible.
Similar arguments lead to the rest of results given in Table~\ref{tableop}. Note that, if there is an equality in the conditions mentioned in Table~\ref{tableop}, then the solution will not be unique. For example, $\left(\dfrac{dP_e}{P_{0,1}}=0\right)\Leftrightarrow\left(\dfrac{P_0}{P_1}=\dfrac{1-P_d}{1-P_f}\right)$ implies that the
$P_e$ is constant as a function of $P_{0,1}$. In other words, the attacker will be indifferent in choosing the parameter $P_{0,1}$ because any value of $P_{0,1}$ will result in the same probability of error.

\begin{table}
\begin{center}
\caption{Soultion Of Maximizing Local Error $P_e$ Problem }
\label{tableop}
\renewcommand{\arraystretch}{1.5}
\resizebox{6cm}{!} {
  \begin{tabular}{ l | c | r }
    \hline
    $P_{1,0}$ & $P_{0,1}$ & Condition \\ \hline
    0 & 0 & $\frac{P_d}{P_f}<\frac{P_0}{P_1}<\frac{1-P_d}{1-P_f}$ \\ 
    0 & 1 & $\frac{P_d}{P_f}>\frac{P_0}{P_1}<\frac{1-P_d}{1-P_f}$ \\ 
    1 & 0 & $\frac{P_d}{P_f}<\frac{P_0}{P_1}>\frac{1-P_d}{1-P_f}$ \\
    1 & 1 & $\frac{P_d}{P_f}>\frac{P_0}{P_1}>\frac{1-P_d}{1-P_f}$ \\
   \hline
  \end{tabular}
}
\end{center}
\end{table}
Next, to gain insight into the solution, we present illustrative examples that corroborate our results.

\subsection{Illustrative Examples}

\begin{figure*}[t]
\centering
\subfigure[] {
\includegraphics[height=0.25\textheight, width=0.4\textwidth]{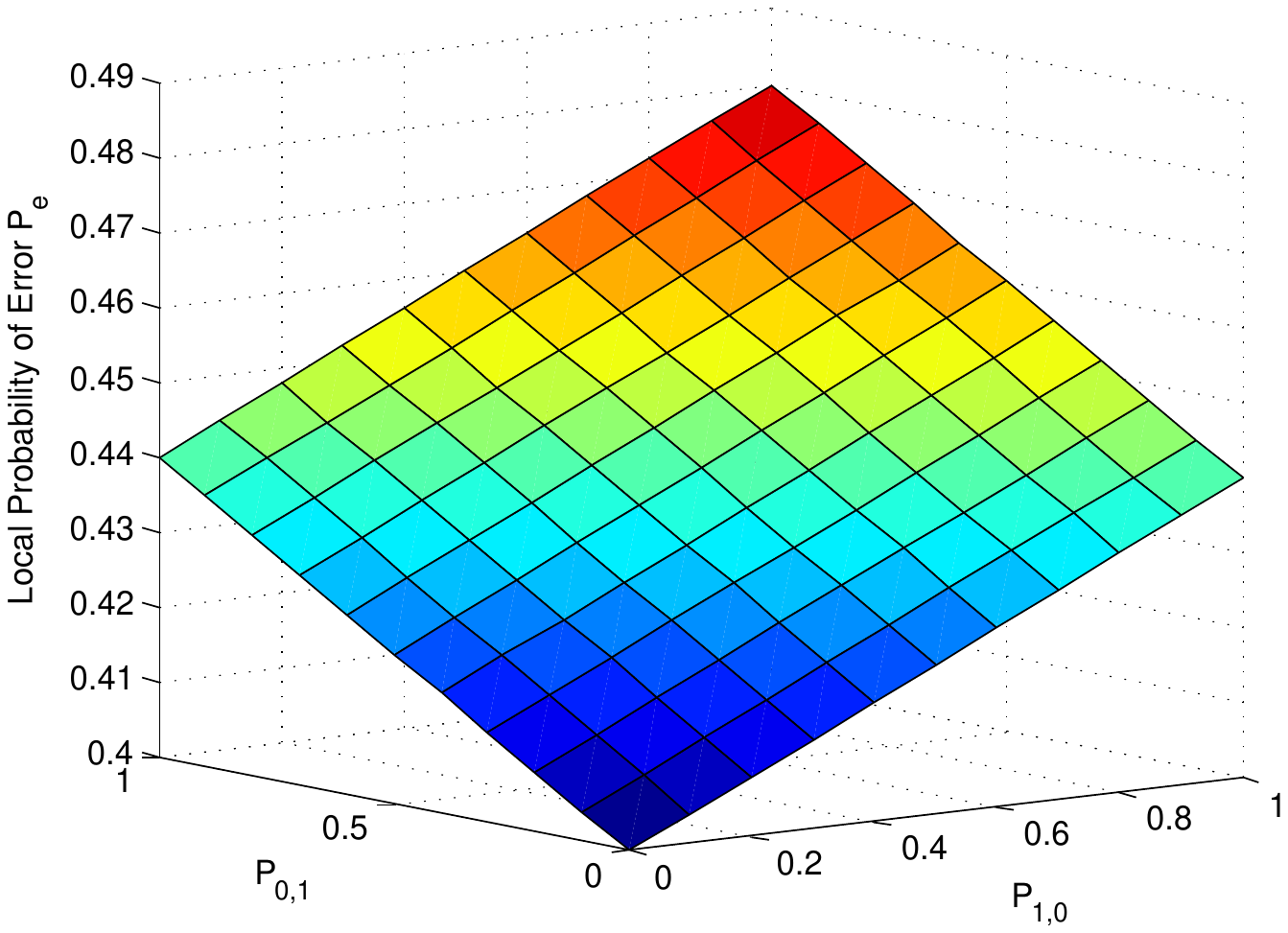}
\label{loca} }
\subfigure[]{
\includegraphics[height=0.25\textheight, width=0.4\textwidth]{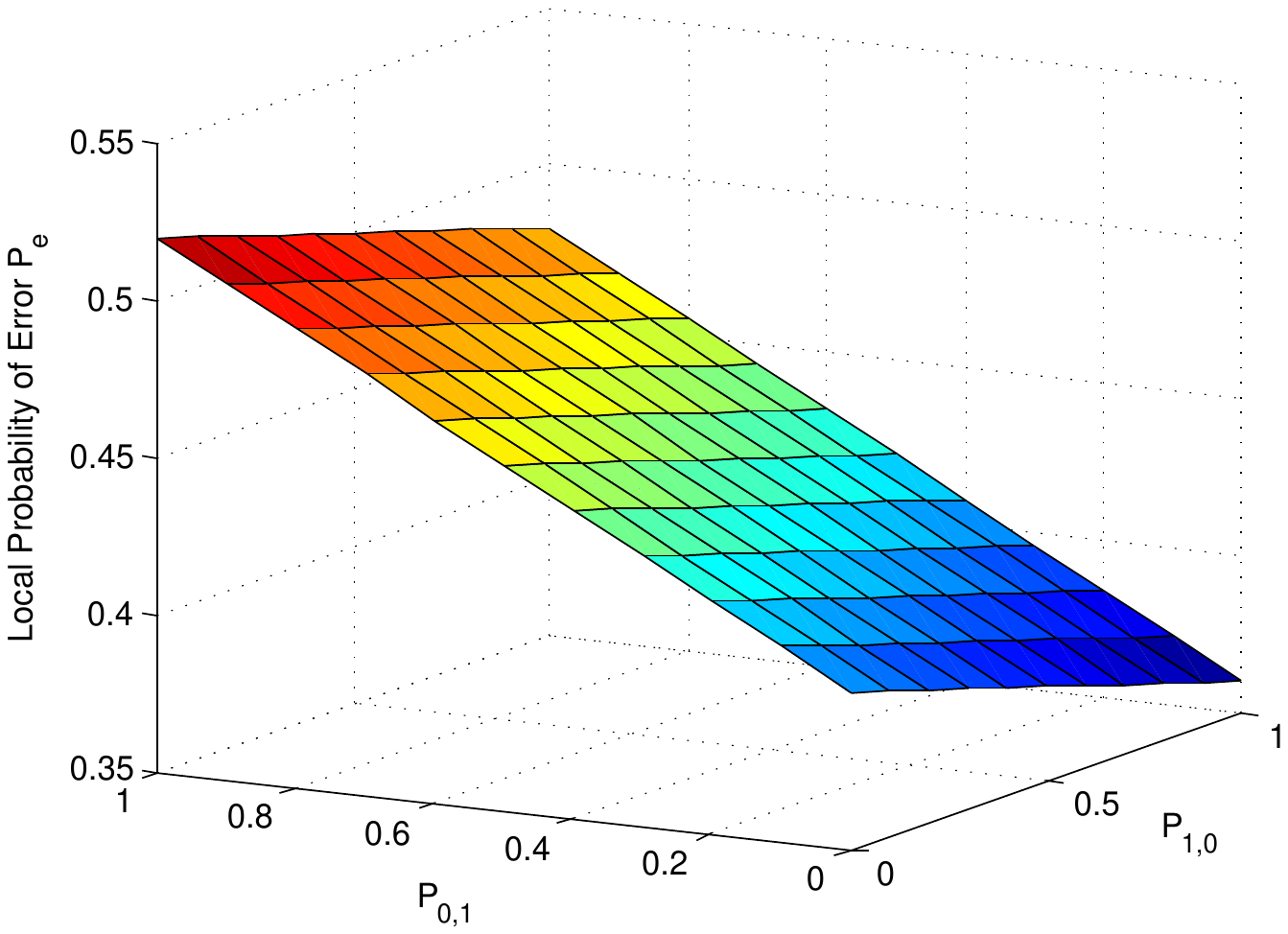}
\label{loca1}}
\caption{ \subref{loca} $P_e$ as a function of $(P_{1,0},P_{0,1})$ when $P_0=P_1=0.5$. \subref{loca1} $P_e$ as a function of $(P_{1,0},P_{0,1})$ when $P_0=0.1,P_1=0.9$.}
\label{local2}
\end{figure*}

In Figure~\ref{loca}, we plot the local probability of error $P_e$ as a function of $(P_{1,0},P_{0,1})$ when $(P_0=P_1=0.5)$.
We assume that the local probability of detection is $P_d=0.8$ and the local probability of false alarm is $P_f=0.1$ such that $\frac{P_d}{P_f}=8$, $\frac{1-P_d}{1-P_f}=.2222$, and $\frac{P_0}{P_1}=1$. Clearly, $\frac{P_d}{P_f}>\frac{P_0}{P_1}>\frac{1-P_d}{1-P_f}$ and it implies that the optimal attacking strategy is $(P_{1,0},P_{0,1})=(1,1)$, which can be verified from Figure~\ref{loca}.

In Figure~\ref{loca1}, we study the local probability of error $P_e$ as a function of the attacking strategy $(P_{1,0},P_{0,1})$ when $(P_0=0.1,P_1=0.9)$.
We assume that the local probability of detection is $P_d=0.8$ and the local probability of false alarm is $P_f=0.1$ such that $\frac{P_d}{P_f}=8$, $\frac{1-P_d}{1-P_f}=.2222$, and $\frac{P_0}{P_1}=.1111$. Clearly, $\frac{P_d}{P_f}>\frac{P_0}{P_1}<\frac{1-P_d}{1-P_f}$ implies that the optimal attacking strategy is $(P_{1,0},P_{0,1})=(0,1)$, which can be verified from the Figure~\ref{loca1}. These results corroborate our theoretical results presented in Table~\ref{tableop}.

In the next section, we investigate the scenario where Byzantines are aware of the fusion rule $K$ used at the FC and can use
this knowledge to provide false information in an optimal manner to blind the FC. However, the FC does not have knowledge of Byzantine's attacking strategies $(\alpha,P_{j,0},P_{j,1})$ and does not optimize against Byzantine's behavior. Since majority rule is a widely used fusion rule~\cite{Shi,Zhang,Kailkhura}, we assume that the FC uses the majority rule to make the global decision. 
 
\section{Optimal Byzantine Attacking Strategies with Knowledge of Majority Fusion Rule}
\label{majority}
In this section, we investigate optimal Byzantine attacking strategies in a distributed detection system, with the attacker having knowledge about the fusion rule used at the FC. However, we assume that the FC is not strategic in nature, and uses a majority rule, without trying to optimize against the Byzantine's behavior. We consider both the FC and the Byzantine to be strategic in Section~\ref{sec6}. The performance criterion at the FC is assumed to be the probability of error $P_E$.

For a fixed fusion rule $(K^*)$, which, as mentioned before, is assumed to be the majority rule $K^*=\lceil\frac{N+1}{2}\rceil$, $P_E$ varies with the parameters $(\alpha,P_{j,0},P_{j,1})$ which are under the control of the attacker. 
The Byzantine attack problem can be formally stated as follows:
\begin{equation}
\begin{aligned}
& \underset{P_{j,0},P_{j,1}}{\text{maximize}}
& & P_{E}(\alpha,P_{j,0},P_{j,1})\\
& \text{subject to}
& & 0 \leq P_{j,0}\leq 1\\
& & & 0 \leq P_{j,1}\leq 1. \\
\end{aligned}
\tag{P2}\label{problem}
\end{equation} 
For a fixed fraction of Byzantines $\alpha$, the attacker wants to maximize the probability of error $P_E$ by choosing its attacking strategy $(P_{j,0},P_{j,1})$ optimally. We assume that the attacker is aware of the fact that the FC is using the majority rule for making the global decision. Before presenting our main results for Problem~\ref{problem}, we make an assumption that will be used in the theorem.  

\begin{assumption}
\label{assumption}
We assume that $\alpha< \min\{(0.5-P_f),(1-(m/P_d))\}$,\footnote{Condition $\alpha< \min\{(0.5-P_f),(1-(m/P_d))\}$, where $m=\frac{N}{2N-2}>0.5$, suggests that as $N$ tends to infinity, $m=\dfrac{N}{2N-2}$ tends to $0.5$. When $P_d$ tends to 1 and $P_f$ tends to $0$, the above condition becomes $\alpha<0.5$.} where $m=\frac{N}{2N-2}$.
\end{assumption}
A consequence of this assumption is $\pi_{1,1}>m$, which can be shown as follows. By \eqref{equ2}, we have
 \begin{eqnarray}
 \pi_{1,1}&=&\alpha(P_{1,0}(1-P_d)+(1-P_{0,1})P_d)+(1-\alpha)P_d\nonumber\\
 &=&\alpha P_{1,0}(1-P_d) -\alpha P_d  P_{0,1}+P_d \nonumber\\
 &\geq &-\alpha P_d  P_{0,1}+P_d \geq P_d(1-\alpha)>m.\label{pi-m}
 \end{eqnarray} 
 Eq. \eqref{pi-m} is true because $\alpha< \min\{(0.5-P_f),(1-(m/P_d))\} \leq (1-(m/P_d))$. Another consequence of this assumption is $\pi_{1,0}<0.5$, which can be shown as follows. From \eqref{equ1}, we have
  \begin{eqnarray}
  \pi_{1,0}&=&\alpha(P_{1,0}(1-P_f)+(1-P_{0,1})P_f)+(1-\alpha)P_f\nonumber\\
  &=& \alpha P_{1,0}-\alpha P_f(P_{1,0}+P_{0,1})+P_f \nonumber\\
  &\leq & \alpha+P_f<0.5.\label{pi-m-2}
  \end{eqnarray} 
 Eq. \eqref{pi-m-2} is true because $\alpha< \min\{(0.5-P_f),(1-(m/P_d))\} \leq (0.5-P_f)$. 

Next, we analyze the properties of $P_E$ with respect to $(P_{1,0},P_{0,1})$ under our assumption that enable us to find the optimal attacking strategies. 
\begin{lemma}
            \label{Lemma-2}
            Assume that the FC employs the majority fusion rule $K^*$ and $\alpha< \min\{(0.5-P_f),(1-(m/P_d))\}$, where $m=\frac{N}{2N-2}$. 
          Then, for any fixed value of $P_{0,1}$, the error probability $P_E$ at the FC is a quasi-convex function of $P_{1,0}$.
          \end{lemma}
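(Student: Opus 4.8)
The plan is to treat $P_E$ as a scalar function of $t:=P_{1,0}\in[0,1]$, with $P_{0,1},\alpha,P_d,P_f,N$ all fixed, and to show that its derivative $dP_E/dt$ is non-decreasing on $[0,1]$. A differentiable function with non-decreasing derivative is convex, and convexity implies quasi-convexity; alternatively, a non-decreasing derivative can change sign at most once on $[0,1]$, and only from negative to positive, so $P_E$ is first non-increasing and then non-decreasing in $P_{1,0}$, which is precisely quasi-convexity. (This route in fact yields slightly more than the lemma asks.)

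First I would record that, for fixed $P_{0,1}$, both $\pi_{1,0}$ and $\pi_{1,1}$ are affine and strictly increasing in $t$: by \eqref{equ1}--\eqref{equ2}, $d\pi_{1,0}/dt=\alpha(1-P_f)>0$ and $d\pi_{1,1}/dt=\alpha(1-P_d)>0$. Then, using the telescoping identity $\frac{d}{dp}\sum_{i=K}^{N}\binom{N}{i}p^i(1-p)^{N-i}=N\binom{N-1}{K-1}p^{K-1}(1-p)^{N-K}=:b_{K,N}(p)$ together with the chain rule applied to \eqref{qf}--\eqref{qd}, one obtains
\[
\frac{dP_E}{dt}=\alpha(1-P_f)\,P_0\,b_{K^*,N}(\pi_{1,0})-\alpha(1-P_d)\,P_1\,b_{K^*,N}(\pi_{1,1}).
\]
The function $b_{K,N}$ is positive on $(0,1)$ and unimodal: since $b_{K,N}'(p)$ has the sign of $(K-1)-(N-1)p$, it is increasing on $[0,(K-1)/(N-1)]$ and decreasing on $[(K-1)/(N-1),1]$.

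The heart of the argument is to locate $\pi_{1,0}$ and $\pi_{1,1}$ relative to the mode $(K^*-1)/(N-1)$ of $b_{K^*,N}$. For the majority rule $K^*=\lceil\frac{N+1}{2}\rceil$ one checks, separately for $N$ odd and $N$ even, that $\tfrac12\le (K^*-1)/(N-1)\le m$ with $m=\frac{N}{2N-2}$ (the lower bound is tight for $N$ odd, the upper for $N$ even). Combining this with the two consequences of Assumption~\ref{assumption} already derived in \eqref{pi-m} and \eqref{pi-m-2}, which hold for every $t\in[0,1]$, we get $\pi_{1,0}<0.5\le (K^*-1)/(N-1)$ and $\pi_{1,1}>m\ge (K^*-1)/(N-1)$. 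Hence as $t$ increases, $\pi_{1,0}$ increases while remaining in the increasing branch of $b_{K^*,N}$, so $t\mapsto b_{K^*,N}(\pi_{1,0})$ is non-decreasing; simultaneously $\pi_{1,1}$ increases while remaining in the decreasing branch, so $t\mapsto b_{K^*,N}(\pi_{1,1})$ is non-increasing. Therefore $dP_E/dt$ is a non-decreasing term minus a non-increasing term, hence non-decreasing, and the quasi-convexity of $P_E$ in $P_{1,0}$ follows as explained above.

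The step I expect to be the main obstacle is the bookkeeping that pins down the mode of $b_{K^*,N}$ and establishes the two-sided bound $\tfrac12\le (K^*-1)/(N-1)\le m$ for both parities of $N$; this is exactly where the unusual constant $m=\frac{N}{2N-2}$ in Assumption~\ref{assumption} is forced, and it is the reason the assumption is stated with $m$ rather than $\tfrac12$. Everything else --- the binomial-tail derivative identity, the unimodality of $b_{K^*,N}$, and the ``non-decreasing derivative $\Rightarrow$ quasi-convex'' implication --- is routine. One should also note that Assumption~\ref{assumption} is vacuous unless $P_d>m$, which forces $N\ge 3$ and hence $2\le K^*\le N-1$, so the interior computation of $b_{K^*,N}'$ is legitimate; the endpoints $p\in\{0,1\}$ are handled by continuity.
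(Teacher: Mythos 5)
Your proof is correct, but it takes a genuinely different route from the paper's. The paper factors the derivative as $\frac{dP_E}{dP_{1,0}} = g\left(e^{r}-1\right)$ with $g\ge 0$ and shows (Appendix~\ref{proof4}) that $r$ is increasing in $P_{1,0}$ via a chain of inequalities handled separately for odd and even $N$; this forces at most one sign change of the derivative, from negative to positive, which is exactly quasi-convexity. You instead keep the derivative in the form $\alpha(1-P_f)P_0\,b(\pi_{1,0})-\alpha(1-P_d)P_1\,b(\pi_{1,1})$ with $b(p)=N\binom{N-1}{K^*-1}p^{K^*-1}(1-p)^{N-K^*}$, exploit the unimodality of $b$ with mode $(K^*-1)/(N-1)$, and use the two consequences of Assumption~\ref{assumption} (namely $\pi_{1,0}<0.5$ and $\pi_{1,1}>m$, which hold uniformly in $(P_{1,0},P_{0,1})$) together with the parity check $\tfrac12\le (K^*-1)/(N-1)\le m$ to place $\pi_{1,0}$ on the increasing branch and $\pi_{1,1}$ on the decreasing branch of $b$. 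This makes $\frac{dP_E}{dP_{1,0}}$ itself non-decreasing, so you actually establish convexity of $P_E$ in $P_{1,0}$, strictly stronger than the quasi-convexity claimed (the paper's $g(e^r-1)$ factorization does not yield monotonicity of the derivative, only a single sign change). Your identification of $m=\frac{N}{2N-2}$ as precisely the mode of $b$ for even $N$ is also a nicer explanation of where that constant in Assumption~\ref{assumption} comes from than anything in the paper. The only ingredients you share with the paper are the binomial-tail derivative identity (the paper's Appendix~\ref{proof3}) and the uniform bounds \eqref{pi-m} and \eqref{pi-m-2}; your handling of the degenerate cases ($N\le 2$ making the assumption vacuous, so $2\le K^*\le N-1$) is careful and closes a gap the paper leaves implicit.
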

\begin{proof}
A function $f(P_{1,0})$ is quasi-convex if, for some $P_{1,0}^*$, $f(P_{1,0})$ is non-increasing
            for $P_{1,0} \leq P_{1,0}^*$ and $f(P_{1,0})$ is non-decreasing for $P_{1,0} \geq P_{1,0}^*$.
            In other words, the
            lemma is proved if $\dfrac{d P_E}{d P_{1,0}} \leq 0$ (or $\dfrac{d P_E}{d P_{1,0}} \geq 0$)
            for all $P_{1,0}$, or if for some $P_{1,0}^*$, $\dfrac{d P_E}{d P_{1,0}} \leq 0$
            when $P_{1,0} \leq P_{1,0}^*$ and
            $\frac{d P_E}{d P_{1,0}} \geq 0$ when $P_{1,0} \geq P_{1,0}^*$.
First, we calculate the partial derivative of $P_E$ with respect to
            $P_{1,0}$ for an arbitrary $K$ as follows: 
             \begin{equation}
                    \dfrac{d P_E}{d P_{1,0}} = P_0\dfrac{d Q_F}{d P_{1,0}} - P_1 \dfrac{d Q_D}{dP_{1,0}}.
\end{equation}
The detailed derivation of $\dfrac{d P_E}{d P_{1,0}}$ is given in Appendix~\ref{proof3} and we present a summary of the main results below.
\begin{equation}
\dfrac{d Q_F}{d P_{1,0}}= \alpha(1-P_f) N \nchoosek{N-1}{K-1} \left( \pi_{1,0} \right)^{K-1} \left( 1-\pi_{1,0} \right)^{N-K},         
\end{equation}
\begin{equation}
\dfrac{d Q_D}{d P_{1,0}}= \alpha(1-P_d) N \nchoosek{N-1}{K-1} \left( \pi_{1,1} \right)^{K-1} \left( 1-\pi_{1,1} \right)^{N-K},         
\end{equation}
and
\begin{equation*}
\dfrac{d P_E}{d P_{1,0}}=-P_{1}\alpha(1-P_d) N \nchoosek{N-1}{K-1} \left( \pi_{1,1} \right)^{K-1} \left( 1-\pi_{1,1} \right)^{N-K}
\end{equation*}
\begin{equation}
+P_{0}\alpha(1-P_f) N \nchoosek{N-1}{K-1} \left( \pi_{1,0} \right)^{K-1} \left( 1-\pi_{1,0} \right)^{N-K}. \label{derivative-P_E}
\end{equation}
$\dfrac{d P_E}{d P_{1,0}}$ given in \eqref{derivative-P_E} can be reformulated as follows:
            
         \begin{equation}
            \label{eqn6}
                \dfrac{d P_E}{d P_{1,0}} = g\left( P_{1,0}, K, \alpha \right)
                \left( e^{r\left( P_{1,0}, K, \alpha \right)} - 1 \right),
            \end{equation}
where
\begin{equation}
\label{gfunction}
g\left( P_{1,0}, K, \alpha \right)= N \nchoosek{N-1}{K-1}P_1 \alpha(1-P_d)( \pi_{1,1})^{K-1}( 1-\pi_{1,1})^{N-K}
\end{equation}                      
\text{and} 
\begin{eqnarray}
r\left( P_{1,0}, K, \alpha \right)&=& \ln{\left( \dfrac{P_0}{P_1}\dfrac{1-P_f}{1-P_d} \left( \dfrac{\pi_{1,0}}{\pi_{1,1}} \right)^{\left( K-1 \right)}\left( \dfrac{1-\pi_{1,0}}{1-\pi_{1,1}} \right)^{ \left( N-K \right) }\right)}\nonumber\\
&=& \ln\dfrac{P_0}{P_1}\dfrac{1-P_f}{1-P_d}+(K-1)\ln \dfrac{\pi_{1,0}}{\pi_{1,1}}+(N-K)\ln \dfrac{1-\pi_{1,0}}{1-\pi_{1,1}}. 
\end{eqnarray}
It can be seen that $g\left( P_{1,0}, K, \alpha \right) \geq 0$ so that
the sign of $\frac{d P_E}{d P_{1,0}}$ depends only on the value of $r\left( P_{1,0}, K, \alpha \right)$. 
To prove that $P_E$ is a quasi-convex function of $P_{1,0}$ when the majority rule $K^*$ is used at the FC, it is sufficient to show that $r\left( P_{1,0}, K^*, \alpha \right)$ is a non-decreasing function. Differentiating $r\left( P_{1,0}, K^*, \alpha \right)$ with respect to $P_{1,0}$, we get
            \begin{equation*}
               \frac{d r\left( P_{1,0}, K^*, \alpha \right) }{d P_{1,0}} = (K^*-1)\left(\dfrac{\alpha(1-P_f)}{\pi_{1,0}}-\dfrac{\alpha(1-P_d)}{\pi_{1,1}}\right)+(N-K^*)\left(\dfrac{\alpha(1-P_d)}{1-\pi_{1,1}}-\dfrac{\alpha(1-P_f)}{1-\pi_{1,0}}\right)
            \end{equation*}
            \begin{equation}
                = (K^*-1)\alpha\left(\dfrac{1-P_f}{\pi_{1,0}}-\dfrac{1-P_d}{\pi_{1,1}}\right)-(N-K^*)\alpha\left(\dfrac{1-P_f}{1-\pi_{1,0}}-\dfrac{1-P_d}{1-\pi_{1,1}}\right).
            \end{equation}
It can be shown that $ \dfrac{d r\left( P_{1,0}, K^*, \alpha \right) }{d P_{1,0}} > 0$ (see Appendix \ref{proof4}) and this completes the proof.
\end{proof} 
         
Quasi-convexity of $P_E$ over $P_{1,0}$ implies that the maximum of the function occurs on the corners, i.e., $P_{1,0}=\;0\;\text{or}\;1$ (may not be unique). Next, we analyze the properties of $P_E$ with respect to $P_{0,1}$.

\begin{lemma}
           \label{Lemma-3}
           Assume that the FC employs the majority fusion rule $K^*$ and $\alpha< \min\{(0.5-P_f),(1-(m/P_d))\}$, where $m=\frac{N}{2N-2}$. 
           Then, the probability of error $P_E$ at the FC is a quasi-convex function of $P_{0,1}$ for a fixed $P_{1,0}$.
\end{lemma}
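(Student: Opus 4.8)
The plan is to follow the template of the proof of Lemma~\ref{Lemma-2}: differentiate $P_E$ with respect to $P_{0,1}$, put the derivative in a form whose sign is transparent, and show that this sign can switch at most once, from negative to positive, as $P_{0,1}$ increases over $[0,1]$. Differentiating $Q_F$ and $Q_D$ through \eqref{equ1}--\eqref{equ2} exactly as in Appendix~\ref{proof3}, and using $d\pi_{1,0}/dP_{0,1}=-\alpha P_f\le 0$ and $d\pi_{1,1}/dP_{0,1}=-\alpha P_d\le 0$, one obtains
\[
\frac{dP_E}{dP_{0,1}}=N\binom{N-1}{K-1}\,\alpha\,\Bigl[\,P_1 P_d\,h(\pi_{1,1})-P_0 P_f\,h(\pi_{1,0})\,\Bigr],\qquad h(p):=p^{K-1}(1-p)^{N-K}.
\]
As in Lemma~\ref{Lemma-2} one could factor the bracket as $\tilde g(\cdot)\bigl(e^{\tilde r(\cdot)}-1\bigr)$ with $\tilde g\ge 0$ and reduce the problem to showing that $\tilde r$ is non-decreasing, but I would instead argue directly on $h(\pi_{1,0})$ and $h(\pi_{1,1})$, which I expect to be cleaner.

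The key observation is that $h$ is unimodal on $[0,1]$ with mode $p^\circ=\frac{K-1}{N-1}$, and for the majority rule $K=K^*=\lceil\frac{N+1}{2}\rceil$ this mode equals $1/2$ when $N$ is odd and $m=\frac{N}{2N-2}$ when $N$ is even, so that $1/2\le p^\circ\le m$ in all cases (here $N\ge3$, since Assumption~\ref{assumption} cannot be satisfied when $N=2$). Now invoke the two consequences of Assumption~\ref{assumption} established just before Lemma~\ref{Lemma-2}: $\pi_{1,1}>m$ by \eqref{pi-m} and $\pi_{1,0}<1/2$ by \eqref{pi-m-2}, both valid for every admissible $(P_{1,0},P_{0,1})$ since those derivations never use a particular value of $P_{0,1}$. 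Hence $\pi_{1,0}$ always lies in $(0,p^\circ)$, on which $h$ is non-decreasing, while $\pi_{1,1}$ always lies in $(p^\circ,1)$, on which $h$ is non-increasing. Because $\pi_{1,0}$ and $\pi_{1,1}$ are affine and non-increasing in $P_{0,1}$, it follows that $P_{0,1}\mapsto h(\pi_{1,1})$ is non-decreasing and $P_{0,1}\mapsto h(\pi_{1,0})$ is non-increasing; therefore $P_1 P_d\,h(\pi_{1,1})-P_0 P_f\,h(\pi_{1,0})$, and with it $dP_E/dP_{0,1}$, is a non-decreasing function of $P_{0,1}$. A differentiable function on $[0,1]$ whose derivative is non-decreasing changes that derivative's sign at most once, and only from $-$ to $+$, so $P_E$ is first non-increasing and then non-decreasing in $P_{0,1}$, i.e.\ quasi-convex (in fact convex), which is the claim.

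I expect the only delicate points to be bookkeeping: (i) locating $p^\circ$ relative to $1/2$ and $m$ for both parities of $N$, and handling the degenerate cases $P_f=0$ or $P_d=1$ (in which the second term vanishes or the edge probabilities saturate and the conclusion is immediate); and (ii) checking that the bounds $\pi_{1,1}>m$ and $\pi_{1,0}<1/2$ hold uniformly over $P_{0,1}\in[0,1]$, which is immediate from \eqref{pi-m}--\eqref{pi-m-2}. The alternative route---showing $\tilde r$ non-decreasing by direct differentiation---produces an expression of the form (a non-positive term) $+$ (a non-negative term) whose sign again requires the estimates $\pi_{1,0}<1/2<m<\pi_{1,1}$, so the unimodality argument merely repackages the same facts more transparently; if one prefers to keep the exposition parallel to Lemma~\ref{Lemma-2}, its conclusion can equally be restated as ``$\tilde r$ is non-decreasing'' and deferred to an appendix analogous to Appendix~\ref{proof4}.
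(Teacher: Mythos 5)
Your proof is correct, and it takes a genuinely different route from the paper's. The paper keeps the decomposition $\frac{dP_E}{dP_{0,1}} = g(\cdot)\,(e^{r(\cdot)}-1)$ with $g\ge 0$ and establishes quasi-convexity by showing $\frac{dr}{dP_{0,1}}>0$, which it reduces to the inequality $(N-K^*)\bigl(\frac{P_d}{1-\pi_{1,1}}-\frac{P_f}{1-\pi_{1,0}}\bigr) > (K^*-1)\bigl(\frac{P_d}{\pi_{1,1}}-\frac{P_f}{\pi_{1,0}}\bigr)$ and then verifies through separate chains of algebraic manipulations for $N$ even ($K^*=N/2+1$) and $N$ odd ($K^*=(N+1)/2$); in effect it proves that the \emph{ratio} $P_1P_d h(\pi_{1,1})/\bigl(P_0P_f h(\pi_{1,0})\bigr)$ is non-decreasing in $P_{0,1}$. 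You instead exploit the unimodality of the kernel $h(p)=p^{K-1}(1-p)^{N-K}$ with mode $p^\circ=\frac{K-1}{N-1}$, note that the majority rule places $p^\circ$ at exactly $1/2$ ($N$ odd) or $m$ ($N$ even), and use the uniform consequences of Assumption~\ref{assumption} in \eqref{pi-m} and \eqref{pi-m-2} (whose derivations indeed do not depend on the value of $P_{0,1}$) to get $\pi_{1,0}<1/2\le p^\circ\le m<\pi_{1,1}$, so each of $\pi_{1,0},\pi_{1,1}$ sits on the monotone branch of $h$ that makes its term move the right way as $P_{0,1}$ increases. This yields that $\frac{dP_E}{dP_{0,1}}$ is itself non-decreasing, i.e., $P_E$ is \emph{convex} in $P_{0,1}$ --- strictly stronger than the quasi-convexity claimed --- and it eliminates the parity case split, since the two parities enter only through the location of $p^\circ$ within $[1/2,m]$. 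Your argument also makes transparent why Assumption~\ref{assumption} has the specific thresholds $0.5-P_f$ and $1-m/P_d$: they are exactly what is needed to keep $\pi_{1,0}$ and $\pi_{1,1}$ on opposite sides of the mode of $h$ for the majority rule. The only bookkeeping worth writing out explicitly is that Assumption~\ref{assumption} forces $N\ge 3$, so $K^*-1\ge 1$ and $N-K^*\ge 1$ and the mode is interior; you flag this, along with the degenerate cases $P_f=0$ and $P_d=1$, correctly.
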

 \begin{proof}          
 For a fixed $P_{1,0}$, we have                   
  \begin{equation}
 (\pi_{1,0})'=d\pi_{1,0}/dP_{0,1} =\alpha(-P_f).
 \end{equation}
By a similar argument as given in Appendix~\ref{proof3}, for an arbitrary $K$ we have
 \begin{equation*}
 \dfrac{d P_E}{d P_{0,1}}=P_{1}\alpha P_d N \nchoosek{N-1}{K-1} \left( \pi_{1,1} \right)^{K-1} \left( 1-\pi_{1,1} \right)^{N-K}
 \end{equation*}
 \begin{equation}
 -P_{0}\alpha P_f N \nchoosek{N-1}{K-1} \left( \pi_{1,0} \right)^{K-1} \left( 1-\pi_{1,0} \right)^{N-K}. \label{derivative-P_E-2}
 \end{equation}
$\dfrac{d P_E}{d P_{0,1}}$ given in \eqref{derivative-P_E-2} can be reformulated as follows:
             
          \begin{equation}
             \label{eqn7}
                 \dfrac{d P_E}{d P_{0,1}} = g\left( P_{0,1}, K, \alpha \right)
                 \left( e^{r\left( P_{0,1}, K, \alpha \right)} - 1 \right),
             \end{equation}
 where
 \begin{equation}
 g\left( P_{0,1}, K, \alpha \right)= N \nchoosek{N-1}{K-1}P_0 \alpha P_f( \pi_{1,0})^{K-1}( 1-\pi_{1,0})^{N-K}
 \end{equation}                      
 \text{and}
 \begin{eqnarray}
 r\left( P_{0,1}, K, \alpha \right)&=& \ln{\left( \dfrac{P_1}{P_0} \dfrac{P_d}{P_f} \left( \dfrac{\pi_{1,1}}{\pi_{1,0}} \right)^{\left( K-1 \right)}\left( \dfrac{1-\pi_{1,1}}{1-\pi_{1,0}} \right)^{ \left( N-K \right) }\right)}\nonumber\\
 &=& \ln\dfrac{P_1}{P_0}\dfrac{P_d}{P_f}+(K-1)\ln \dfrac{\pi_{1,1}}{\pi_{1,0}}+(N-K)\ln \dfrac{1-\pi_{1,1}}{1-\pi_{1,0}}. 
 \end{eqnarray}
It can be seen that $g\left( P_{0,1}, K, \alpha \right) \geq 0$ such that
 the sign of $\dfrac{d P_E}{d P_{0,1}}$ depends on the value of $r\left( P_{0,1}, K, \alpha \right)$.
 To prove that $P_E$ is a quasi-convex function of $P_{1,0}$ when the majority rule $K^*$ is used at the FC, it is sufficient to show that $r\left( P_{0,1}, K^*, \alpha \right)$ is a non-decreasing function.
 Differentiating $r\left( P_{0,1}, K^*, \alpha \right)$ with respect to $P_{0,1}$, we get
   \begin{equation}
                \dfrac{d r\left( P_{0,1}, K^*, \alpha \right) }{d P_{0,1}} = (K^*-1)\left(\dfrac{\alpha P_f}{\pi_{1,0}}-\dfrac{\alpha P_d}{\pi_{1,1}}\right)+(N-K^*)\left(\dfrac{\alpha P_d}{1-\pi_{1,1}}-\dfrac{\alpha P_f}{1-\pi_{1,0}}\right)
             \end{equation}
             \begin{equation}
                 = (N-K^*)\alpha\left(\dfrac{ P_d}{1-\pi_{1,1}}-\dfrac{ P_f}{1-\pi_{1,0}}\right)
 -(K^*-1)\alpha\left(\dfrac{P_d}{\pi_{1,1}}-\dfrac{ P_f}{\pi_{1,0}}\right).                
                 \end{equation}
 In the following, we show that
             \begin{equation}
             \label{dr1} 
             \dfrac{d r\left( P_{0,1}, K^*, \alpha \right) }{d P_{0,1}} > 0,
             \end{equation}
 i.e., $r\left( P_{0,1}, K^*, \alpha \right)$ is non-decreasing. It is sufficient to show that
\begin{equation}
\label{majocon}
(N-K^*)\left(\dfrac{ P_d}{1-\pi_{1,1}}-\dfrac{ P_f}{1-\pi_{1,0}}\right)
 >(K^*-1)\left(\dfrac{P_d}{\pi_{1,1}}-\dfrac{ P_f}{\pi_{1,0}}\right). 
\end{equation} 
First, we consider the case when there are an even number of nodes in the network and majority fusion rule is given by $K^*=\dfrac{N}{2}+1$. 
Since $0 \leq\pi_{1,0}<\pi_{1,1}\leq 1$ and $N\geq 2$, we have
 \begin{eqnarray}
 &&
\left(1-\dfrac{2}{N}\right)\dfrac{\pi_{1,1}\pi_{1,0}}{(1-\pi_{1,1})(1-\pi_{1,0})}>-1\nonumber\\
&\Leftrightarrow &
\left(1-\dfrac{2}{N}\right) \left[\dfrac{1}{1-\pi_{1,1}}-\dfrac{1}{1-\pi_{1,0}} \right]>\left[\dfrac{1}{\pi_{1,1}}-\dfrac{1}{\pi_{1,0}} \right]\nonumber\\
  &\Leftrightarrow  &
  \left[\left(1-\dfrac{2}{N}\right)\dfrac{1}{1-\pi_{1,1}}-\dfrac{1}{\pi_{1,1}} \right]>\left[\left(1-\dfrac{2}{N}\right)\dfrac{1}{1-\pi_{1,0}}-\dfrac{1}{\pi_{1,0}} \right].\label{pi-11}
 \end{eqnarray}
 Using the fact that $ \dfrac{P_d}{P_f}>1$, $\pi_{1,1}>\frac{N}{2N-2}$, and $K^*=\dfrac{N}{2}+1$, \eqref{pi-11} becomes
\begin{eqnarray}
&&
 \dfrac{P_d}{P_f}\left[\left(1-\dfrac{2}{N}\right)\dfrac{1}{1-\pi_{1,1}}-\dfrac{1}{\pi_{1,1}} \right]>\left[\left(1-\dfrac{2}{N}\right)\dfrac{1}{1-\pi_{1,0}}-\dfrac{1}{\pi_{1,0}} \right]\nonumber \\
 &\Leftrightarrow &
\left(1-\dfrac{2}{N}\right)\dfrac{ P_d}{1-\pi_{1,1}}-\dfrac{P_d}{\pi_{1,1}}> \left(1-\dfrac{2}{N}\right)\dfrac{P_f}{1-\pi_{1,0}}-\dfrac{P_f}{\pi_{1,0}}\nonumber \\
&\Leftrightarrow &
(N-K^*)\left(\dfrac{ P_d}{1-\pi_{1,1}}-\dfrac{ P_f}{1-\pi_{1,0}}\right)
 >(K^*-1)\left(\dfrac{P_d}{\pi_{1,1}}-\dfrac{ P_f}{\pi_{1,0}}\right). 
 \label{evnmajo}
 \end{eqnarray}
 
Next, we consider the case when there are odd number of nodes in the network and majority fusion rule is given by $K^*=\dfrac{N+1}{2}$. 
By using the fact that $\frac{\pi_{1,0}}{\pi_{1,1}}>\frac{P_f}{P_d}$, it can be seen that the right-hand side of \eqref{evnmajo} is nonnegative. Hence, from \eqref{evnmajo}, we have
\begin{eqnarray}
&&
\left(\dfrac{N}{2}-1\right)\left(\dfrac{ P_d}{1-\pi_{1,1}}-\dfrac{ P_f}{1-\pi_{1,0}}\right)
 >\dfrac{N}{2}\left(\dfrac{P_d}{\pi_{1,1}}-\dfrac{ P_f}{\pi_{1,0}}\right)\nonumber\\
&\Leftrightarrow&
 \left(\dfrac{N-1}{2}\right)\left(\dfrac{ P_d}{1-\pi_{1,1}}-\dfrac{ P_f}{1-\pi_{1,0}}\right)
    >  \left(\dfrac{N-1}{2}\right)\left(\dfrac{ P_d}{1-\pi_{1,1}}-\dfrac{ P_f}{1-\pi_{1,0}}\right)\nonumber\\
&\Leftrightarrow&
(N-K^*)\left(\dfrac{ P_d}{1-\pi_{1,1}}-\dfrac{ P_f}{1-\pi_{1,0}}\right)
 >(K^*-1)\left(\dfrac{P_d}{\pi_{1,1}}-\dfrac{ P_f}{\pi_{1,0}}\right). 
 \nonumber
\end{eqnarray}
This completes our proof.
%
 \end{proof}
 
 \begin{theorem}
             \label{Lemma-4}
   $(1,0)$, $(0,1)$, or $(1,1)$ are the optimal attacking strategies $(P_{1,0},P_{0,1})$ that maximize the probability of error $P_E$, when the majority fusion rule is employed at the FC and $\alpha< \min\{(0.5-P_f),(1-(m/P_d))\}$, where $m=\frac{N}{2N-2}$.
           \end{theorem}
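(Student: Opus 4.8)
The plan is to use the two quasi-convexity results, Lemma~\ref{Lemma-2} and Lemma~\ref{Lemma-3}, to push any global maximizer of $P_E$ over the unit square $[0,1]^2$ onto one of its four vertices, and then to eliminate the vertex $(0,0)$, which corresponds to the Byzantines behaving honestly. First I would record the elementary fact that a function which is quasi-convex on a compact interval (in the sense used in the proof of Lemma~\ref{Lemma-2}: non-increasing up to a switch point and non-decreasing after it) attains its maximum at an endpoint of that interval. Given a global maximizer $(a,b)$ of $P_E$: by Lemma~\ref{Lemma-2} the map $P_{1,0}\mapsto P_E(P_{1,0},b)$ is quasi-convex, so $P_E(a,b)\le\max\{P_E(0,b),P_E(1,b)\}$, and since $(a,b)$ is a global maximizer, some $(c,b)$ with $c\in\{0,1\}$ is also a global maximizer; applying Lemma~\ref{Lemma-3} to $P_{0,1}\mapsto P_E(c,P_{0,1})$ in the same way produces a global maximizer $(c,d)$ with $d\in\{0,1\}$. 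Hence some vertex of $[0,1]^2$ maximizes $P_E$, so it remains only to discard $(0,0)$.

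To eliminate $(0,0)$, I would examine the signs of the two one-dimensional derivatives of $P_E$ at that point. At $(P_{1,0},P_{0,1})=(0,0)$ the Byzantines act honestly, so $\pi_{1,0}=P_f$ and $\pi_{1,1}=P_d$. By \eqref{eqn6} and \eqref{eqn7}, the signs of $\frac{dP_E}{dP_{1,0}}$ and $\frac{dP_E}{dP_{0,1}}$ at $(0,0)$ are the signs of the exponents $r(P_{1,0},K^*,\alpha)$ and $r(P_{0,1},K^*,\alpha)$ evaluated there. Substituting $\pi_{1,0}=P_f$, $\pi_{1,1}=P_d$ into both expressions and adding them, the prior terms $\ln\frac{P_0}{P_1}$ and $\ln\frac{P_1}{P_0}$ cancel, and once the $\ln(\pi_{1,1}/\pi_{1,0})$ and $\ln\frac{1-\pi_{1,1}}{1-\pi_{1,0}}$ pieces are combined the entire $K^*$-dependence telescopes away, leaving
\begin{equation*}
r(P_{1,0},K^*,\alpha)\big|_{(0,0)}+r(P_{0,1},K^*,\alpha)\big|_{(0,0)}=\ln\frac{P_d(1-P_f)}{P_f(1-P_d)}>0,
\end{equation*}
where positivity uses $P_d>P_f$, which is forced by Assumption~\ref{assumption} (it gives $P_d>m>0.5>P_f$, cf. \eqref{pi-m} and \eqref{pi-m-2}). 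Therefore at least one of the two exponents is strictly positive at $(0,0)$.

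Finally, suppose $r(P_{1,0},K^*,\alpha)>0$ at $(0,0)$. Then $\frac{dP_E}{dP_{1,0}}>0$ there, and by the quasi-convexity of $P_{1,0}\mapsto P_E(P_{1,0},0)$ from Lemma~\ref{Lemma-2} the switch point must satisfy $P_{1,0}^*\le 0$, so this restriction is non-decreasing on $[0,1]$ and $P_E(1,0)>P_E(0,0)$. By the symmetric argument, if instead $r(P_{0,1},K^*,\alpha)>0$ at $(0,0)$ then $P_E(0,1)>P_E(0,0)$. In either case $(0,0)$ is not a maximizer, so by the vertex reduction of the first paragraph the maximum is attained at one of $(1,0)$, $(0,1)$, $(1,1)$, which is the assertion of the theorem. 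The only genuinely non-routine step is spotting the telescoping cancellation that makes the sum of the two exponents at $(0,0)$ depend only on $\ln\frac{P_d(1-P_f)}{P_f(1-P_d)}$; the rest is bookkeeping on top of the two preceding lemmas and the endpoint property of quasi-convex functions.
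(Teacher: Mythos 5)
Your proof is correct and follows the same skeleton as the paper's: Lemmas~\ref{Lemma-2} and \ref{Lemma-3} force the maximizer to a vertex of $[0,1]^2$, and then $(0,0)$ is excluded. The one place you genuinely diverge is the exclusion of $(0,0)$: the paper disposes of it in a single sentence (``the attacker does not flip any results''), which is an intuition rather than an argument, whereas you actually prove it. Your telescoping identity checks out --- at $(0,0)$ one has $\pi_{1,0}=P_f$, $\pi_{1,1}=P_d$, the $(K^*-1)$- and $(N-K^*)$-weighted log-ratios in $r(P_{1,0},K^*,\alpha)$ and $r(P_{0,1},K^*,\alpha)$ cancel in pairs along with the prior terms, and the sum collapses to $\ln\bigl[P_d(1-P_f)/(P_f(1-P_d))\bigr]>0$, so at least one directional derivative is strictly positive there (using $g>0$ away from the degenerate values $P_f=0$, $P_d=1$, $P_0P_1=0$, $\alpha=0$, a caveat the paper itself relegates to a footnote); the monotonicity of $r$ established in Appendix~\ref{proof4} and in the proof of Lemma~\ref{Lemma-3} then gives $P_E(1,0)>P_E(0,0)$ or $P_E(0,1)>P_E(0,0)$, so $(0,0)$ cannot attain the maximum. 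So your write-up is not a different route so much as a completed one: it supplies the missing rigor for the step the paper asserts, at the cost of a short computation.
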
 
 \begin{proof}
 Lemma \ref{Lemma-2} and  Lemma \ref{Lemma-3} suggest that one of the corners is  the maximum of $P_E$   because of quasi-convexity. Note that $(0,0)$ cannot be the solution of the maximization problem since the attacker does not flip any results. Hence, we end up with three possibilities: $(1,0)$, $(0,1)$, or $(1,1)$. 
 \end{proof} 
 
 Next, to gain insights into Theorem~\ref{Lemma-4}, we present illustrative examples that corroborate our results.
 
 \subsection{Illustrative Examples}
 \begin{figure*}[t]
  \centering
  \subfigure[] {
  \includegraphics[%
    width=0.425\textwidth,clip=true]{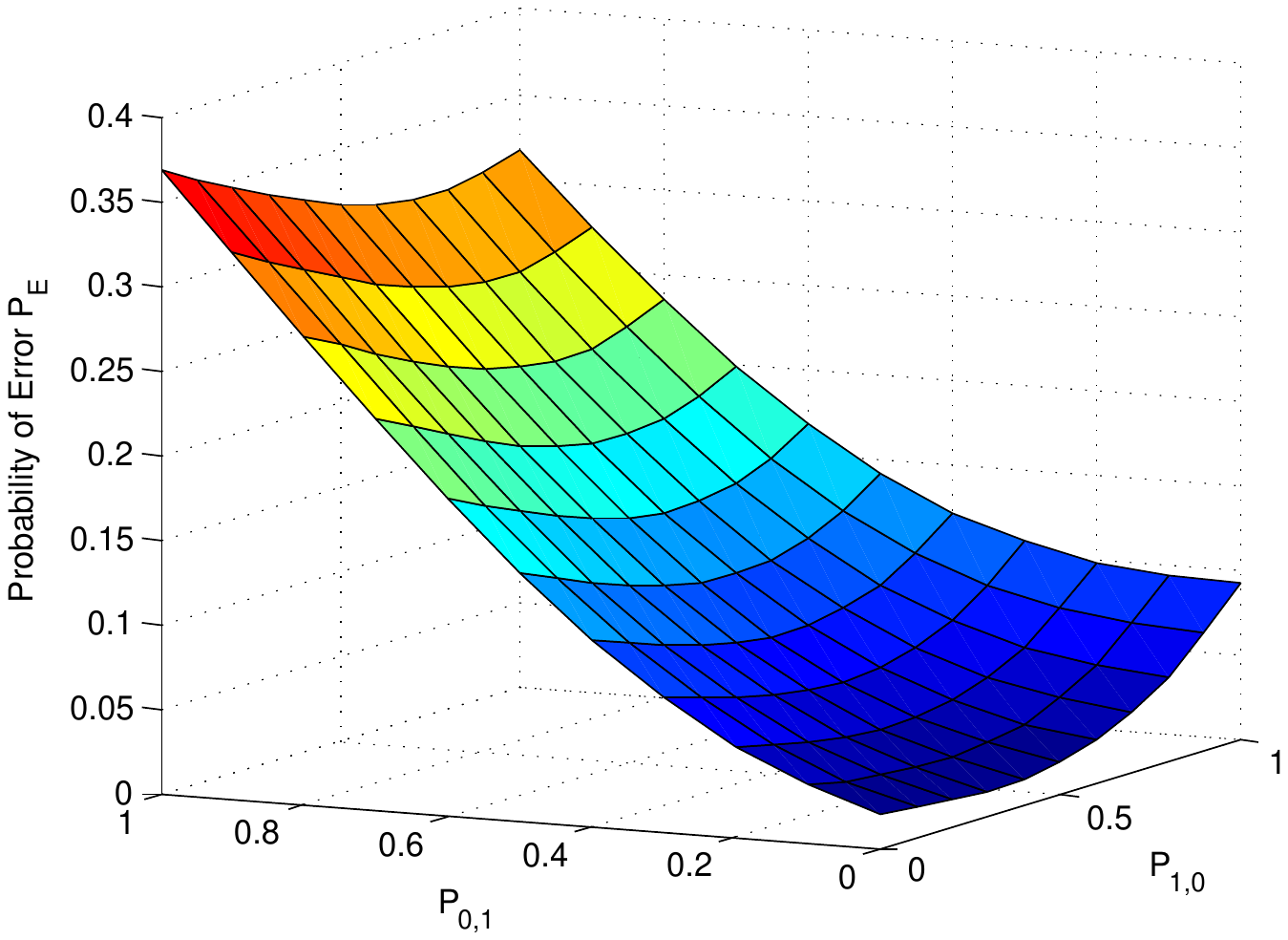}
  \label{even} }
  \subfigure[]{
  \includegraphics[%
    width=0.425\textwidth,clip=true]{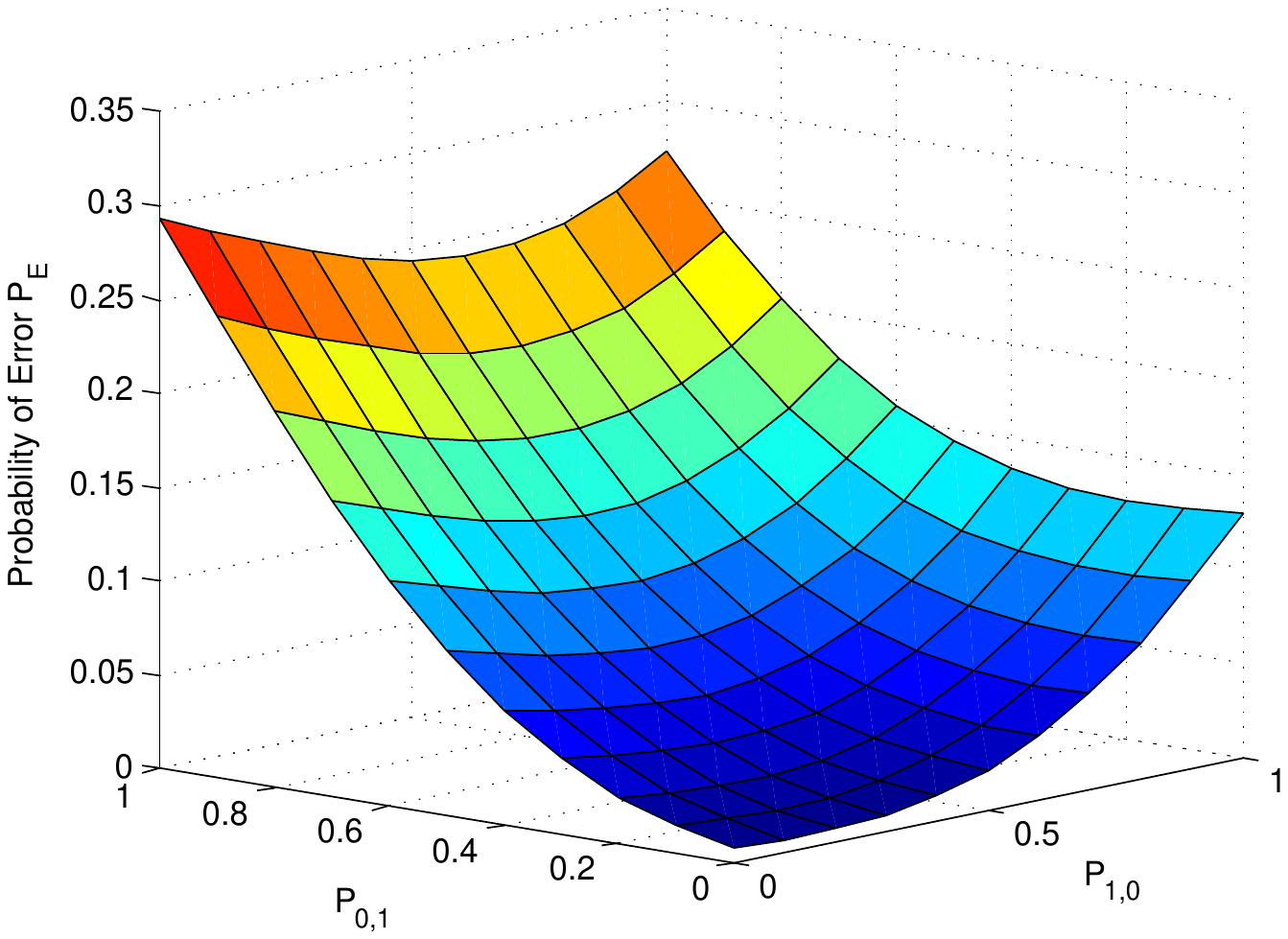}
  \label{odd}}
  \caption{ \subref{even} $P_E$ as a function of $(P_{1,0},P_{0,1})$ for $N=10$. \subref{odd} $P_E$ as a function of $(P_{1,0},P_{0,1})$ for $N=11$.}
  \label{quasi}
  \end{figure*}

In Figure~\ref{even}, we plot the probability of error $P_E$ as a function of the attacking strategy $(P_{1,0},P_{0,1})$ for even number of nodes, $N=10$, in the network.
We assume that the probability of detection is $P_d=0.8$, the probability of false alarm is $P_f=0.1$, prior probabilities are $(P_0=0.4,P_1=0.6)$, and $\alpha=0.37$. Since $\alpha< \min\{(0.5-P_f),(1-(m/P_d))\}$, where $m=\frac{N}{2N-2}$,
quasi-convexity can be observed in Figure~\ref{even}. 
Figure~\ref{odd} shows the probability of error $P_E$ as a function of attacking strategy $(P_{1,0},P_{0,1})$ for odd number of nodes, $N=11$, in the network. Similarly,  quasi-convexity can be observed in  Figure~\ref{odd}.

It is evident from Figures~\ref{even} and \ref{odd} that the optimal attacking strategy $(P_{1,0},P_{0,1})$ is either of the following three possibilities: $(1,0)$, $(0,1)$, or $(1,1)$. These results corroborate our theoretical results presented in Theorem~\ref{Lemma-4}.

Observe that the results obtained for this case are not the same as the results obtained for the asymptotic case (Please see Theorem~\ref{th1}). This is because the asymptotic performance measure (i.e., Chernoff information) is the exponential decay rate of the error probability of the ``optimal detector''. In other words, while optimizing over Chernoff information, one implicitly assumed that the optimal fusion rule is used at the FC.

Next, we investigate the case where the FC has the knowledge of attacker's strategies and uses the optimal fusion rule $K^*$ to make the global decision. Here, the attacker tries to maximize its worst case probability of error $\underset{K}{min} P_E$ by choosing $(P_{1,0},P_{0,1})$ optimally.
        
\section{Optimal Byzantine Attacking Strategies with Strategy-aware FC}
\label{sec6}
In this section, we analyze the scenario where the FC has the knowledge of attacker's strategies and uses the optimal fusion rule $K^*$ to make the global decision. The Byzantine attack problem can be formally stated as follows:
\begin{equation}
\begin{aligned}
& \underset{P_{j,0},P_{j,1}}{\text{maximize}}
& & P_{E}(K^{*},\alpha,P_{j,0},P_{j,1})\\
& \text{subject to}
& & 0 \leq P_{j,0}\leq 1\\
& & & 0 \leq P_{j,1}\leq 1, \\
\end{aligned}
\tag{P3}\label{opt-P4}
\end{equation} 
where $K^{*}$ is the optimal fusion rule. In other words, $K^{*}$ is the best response of the FC to the Byzantine attacking strategies. Next, we find the expression for the optimal fusion rule $K^*$ used at the FC.
\subsection{Optimal Fusion Rule}
First, we design the optimal fusion rule assuming that the local sensor threshold $\lambda$ and the Byzantine attacking strategy $(\alpha,P_{1,0},P_{0,1})$ are fixed and known to the FC.
\begin{lemma}
\label{kopt}
For a fixed local sensor threshold $\lambda$ and $\alpha<\dfrac{1}{P_{0,1}+P_{1,0}}$, the optimal fusion rule is given by
\begin{equation}
K^{*} \quad \mathop{\stackrel{H_1}{\gtrless}}_{H_0} \quad \dfrac{\ln\left[ (P_0/P_1)\left\{(1-\pi_{1,0})/(1-\pi_{1,1})\right\}^{N}\right]} {\ln \left[\{\pi_{1,1}(1-\pi_{1,0})\}/\{ \pi_{1,0} (1-\pi_{1,1})\}\right]}\label{K1*}.
\end{equation}
\end{lemma}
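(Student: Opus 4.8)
The plan is to start directly from the MAP fusion rule already derived in Section~\ref{Testing}, namely $\frac{P(\mathbf{u}|H_1)}{P(\mathbf{u}|H_0)} \gtrless \frac{P_0}{P_1}$, and to exploit the conditional i.i.d.\ structure of the received vector $\mathbf{u}$. First I would note that, conditioned on $H_k$, the $u_i$ are i.i.d.\ Bernoulli with parameter $\pi_{1,k}$, so $P(\mathbf{u}|H_k)$ depends on $\mathbf{u}$ only through the number of ones $K=\sum_{i=1}^N u_i$ and equals $(\pi_{1,k})^{K}(1-\pi_{1,k})^{N-K}$. Substituting this into the likelihood ratio test rewrites the optimal rule as $\left(\frac{\pi_{1,1}}{\pi_{1,0}}\right)^{K}\left(\frac{1-\pi_{1,1}}{1-\pi_{1,0}}\right)^{N-K} \gtrless \frac{P_0}{P_1}$, which already shows that the optimal fusion rule is a test on the count $K$.

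Next I would take logarithms of both sides and collect the terms that are linear in $K$, obtaining $K\,\ln\frac{\pi_{1,1}(1-\pi_{1,0})}{\pi_{1,0}(1-\pi_{1,1})} \gtrless \ln\!\left[\frac{P_0}{P_1}\left(\frac{1-\pi_{1,0}}{1-\pi_{1,1}}\right)^{N}\right]$. To put this in the form of \eqref{K1*} I must divide through by the coefficient $\ln\frac{\pi_{1,1}(1-\pi_{1,0})}{\pi_{1,0}(1-\pi_{1,1})}$ without a sign flip, and this is exactly where the hypothesis $\alpha<\frac{1}{P_{0,1}+P_{1,0}}$ is used. From \eqref{equ1}--\eqref{equ2} one has $\pi_{1,1}-\pi_{1,0}=(P_d-P_f)\bigl(1-\alpha(P_{1,0}+P_{0,1})\bigr)$, which is strictly positive because $P_d>P_f$ and $\alpha(P_{1,0}+P_{0,1})<1$. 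Hence $\frac{\pi_{1,1}}{\pi_{1,0}}>1$ and $\frac{1-\pi_{1,0}}{1-\pi_{1,1}}>1$, so the coefficient of $K$ is strictly positive; dividing preserves the inequality direction and yields precisely \eqref{K1*}, with $H_1$ declared when the observed count exceeds the stated ratio (and, since $K$ is integer-valued, the threshold is its ceiling).

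The derivation is essentially mechanical, so the only point I would flag as the crux is establishing the strict inequality $\pi_{1,1}>\pi_{1,0}$ from the ``not-blind'' condition: this is what guarantees both that the test in $K$ points the right way (decide $H_1$ for large $K$) and that the LRT genuinely reduces to a $K$-out-of-$N$ rule rather than a degenerate one. This is also consistent with Lemma~1: at $\alpha=\frac{1}{P_{1,0}+P_{0,1}}$ the coefficient of $K$ vanishes, the received data carry no information, and the ``fusion rule'' collapses to a decision based on priors alone.
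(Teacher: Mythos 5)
Your proof is correct and follows essentially the same route as the paper: reduce the MAP rule to a likelihood-ratio test on the count of ones, take logarithms, and divide by $\ln\bigl[\{\pi_{1,1}(1-\pi_{1,0})\}/\{\pi_{1,0}(1-\pi_{1,1})\}\bigr]$, whose positivity is guaranteed by $\pi_{1,1}>\pi_{1,0}$ under the hypothesis $\alpha<1/(P_{0,1}+P_{1,0})$. Your explicit identity $\pi_{1,1}-\pi_{1,0}=(P_d-P_f)\bigl(1-\alpha(P_{1,0}+P_{0,1})\bigr)$ is a slightly cleaner justification of that sign condition than the paper's appeal to the blinding condition, but the argument is otherwise identical.
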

\begin{proof}
Consider the maximum a \textit{posteriori} probability (MAP) rule 
\begin{equation*}
\dfrac{P(\mathbf{u}|H_1)}{P(\mathbf{u}|H_0)} \quad \mathop{\stackrel{H_1}{\gtrless}}_{H_0} \quad  \dfrac{P_0}{P_1}.
\end{equation*}
Since the $u_i$’s are independent of each other, the MAP rule simplifies to
\begin{equation*}
\prod_{i=1}^{N}\dfrac{P({u_i}|H_1)}{P({u_i}|H_0)} \quad \mathop{\stackrel{H_1}{\gtrless}}_{H_0} \quad  \dfrac{P_0}{P_1}.
\end{equation*}
Let us assume that $K^*$ out of $N$ nodes send $u_i=1$. Now, the above equation can be written as 
\begin{equation*}
\dfrac{\pi_{1,1}^{K^*}(1-\pi_{1,1})^{N-K^*}}{\pi_{1,0}^{K^*}(1-\pi_{1,0})^{N-K^*}} \quad \mathop{\stackrel{H_1}{\gtrless}}_{H_0} \quad  \dfrac{P_0}{P_1}.
\end{equation*}
Taking logarithms on both sides of the above equation, we have
\begin{eqnarray}
&&
{K^*}\ln\pi_{1,1}+({N-K^*})\ln(1-\pi_{1,1})-K^*\ln\pi_{1,0}-(N-K^*)\ln(1-\pi_{1,0})\quad \mathop{\stackrel{H_1}{\gtrless}}_{H_0} \quad \ln \dfrac{P_0}{P_1}\nonumber\\
&\Leftrightarrow&
K^*[\ln(\pi_{1,1}/\pi_{1,0})+\ln((1-\pi_{1,0})/(1-\pi_{1,1}))]\quad \mathop{\stackrel{H_1}{\gtrless}}_{H_0} \quad \ln \dfrac{P_0}{P_1}+N\ln((1-\pi_{1,0})/(1-\pi_{1,1}))\nonumber\\
&\Leftrightarrow&
K^*\quad \mathop{\stackrel{H_1}{\gtrless}}_{H_0} \quad \dfrac{\ln \dfrac{P_0}{P_1}+N\ln((1-\pi_{1,0})/(1-\pi_{1,1}))}{[\ln(\pi_{1,1}/\pi_{1,0})+\ln((1-\pi_{1,0})/(1-\pi_{1,1}))]}\label{lemma-5-1}\\
&\Leftrightarrow&
K^{*} \quad \mathop{\stackrel{H_1}{\gtrless}}_{H_0} \quad \dfrac{\ln\left[ (P_0/P_1)\left\{(1-\pi_{1,0})/(1-\pi_{1,1})\right\}^{N}\right]} {\ln \left[\{\pi_{1,1}(1-\pi_{1,0})\}/\{ \pi_{1,0} (1-\pi_{1,1})\}\right]},\nonumber
\end{eqnarray}
where \eqref{lemma-5-1} follows from the fact that, for $\pi_{1,1}>\pi_{1,0}$ or equivalently, $\alpha<\dfrac{1}{P_{0,1}+P_{1,0}}$, $[\ln(\pi_{1,1}/\pi_{1,0})+\ln((1-\pi_{1,0})/(1-\pi_{1,1}))]>0$.
\end{proof}
The probability of false alarm $Q_F$ and the probability of detection $Q_D$ for this case are as given in \eqref{qf} and \eqref{qd} with $K=\left \lceil K^* \right \rceil$.
Next, we present our results for the case when the fraction of Byzantines $\alpha>\dfrac{1}{P_{0,1}+P_{1,0}}$.

\begin{lemma}
For a fixed local sensor threshold $\lambda$ and $\alpha>\dfrac{1}{P_{0,1}+P_{1,0}}$, the optimal fusion rule is given by
\begin{equation}
K^* \quad \mathop{\stackrel{H_0}{\gtrless}}_{H_1} \quad \dfrac{\ln\left[(P_1/P_0)\left\{(1-\pi_{1,1})/(1-\pi_{1,0})\right\}^{N}\right]}{[\ln(\pi_{1,0}/\pi_{1,1})+\ln((1-\pi_{1,1})/(1-\pi_{1,0}))]}\label{K2*}.
\end{equation}
\end{lemma}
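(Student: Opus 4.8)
The plan is to run the same derivation as in Lemma~\ref{kopt}, tracking carefully the point where the monotonicity of the likelihood ratio in the number of ones reverses. Starting from the MAP rule $\frac{P(\mathbf{u}|H_1)}{P(\mathbf{u}|H_0)}\gtrless_{H_0}^{H_1}\frac{P_0}{P_1}$ and using conditional independence of the $u_i$'s, I would write the test, when exactly $K^*$ of the $N$ received bits equal $1$, as $\frac{\pi_{1,1}^{K^*}(1-\pi_{1,1})^{N-K^*}}{\pi_{1,0}^{K^*}(1-\pi_{1,0})^{N-K^*}}\gtrless_{H_0}^{H_1}\frac{P_0}{P_1}$. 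Taking logarithms and collecting the terms proportional to $K^*$, the test becomes
\[
K^*\bigl[\ln(\pi_{1,1}/\pi_{1,0})+\ln((1-\pi_{1,0})/(1-\pi_{1,1}))\bigr]\ \gtrless_{H_0}^{H_1}\ \ln\frac{P_0}{P_1}+N\ln\frac{1-\pi_{1,0}}{1-\pi_{1,1}},
\]
exactly as in the intermediate step \eqref{lemma-5-1}.

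The crucial difference from Lemma~\ref{kopt} is the sign of the bracketed coefficient of $K^*$. I would establish, directly from \eqref{equ1} and \eqref{equ2}, the identity $\pi_{1,1}-\pi_{1,0}=(P_d-P_f)\bigl(1-\alpha(P_{1,0}+P_{0,1})\bigr)$. Since $P_d>P_f$, this shows that in the regime $\alpha>\frac{1}{P_{0,1}+P_{1,0}}$ we have $\pi_{1,1}<\pi_{1,0}$, hence $\ln(\pi_{1,1}/\pi_{1,0})<0$ and $\ln((1-\pi_{1,0})/(1-\pi_{1,1}))<0$, so the coefficient is strictly negative. Dividing both sides of the displayed inequality by this negative quantity reverses the inequality, which flips the roles of $H_0$ and $H_1$: the test now declares $H_0$ when $K^*$ exceeds the threshold. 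Finally, I would multiply numerator and denominator of the resulting threshold by $-1$ to rewrite $-\ln(P_0/P_1)-N\ln((1-\pi_{1,0})/(1-\pi_{1,1}))=\ln\bigl[(P_1/P_0)\{(1-\pi_{1,1})/(1-\pi_{1,0})\}^{N}\bigr]$ in the numerator and $-\ln(\pi_{1,1}/\pi_{1,0})-\ln((1-\pi_{1,0})/(1-\pi_{1,1}))=\ln(\pi_{1,0}/\pi_{1,1})+\ln((1-\pi_{1,1})/(1-\pi_{1,0}))$ in the denominator, yielding exactly \eqref{K2*}.

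The only genuinely delicate point is the sign bookkeeping: one must be consistent about the fact that reversing the inequality is equivalent to swapping the hypothesis labels on the $\gtrless$ symbol, and then keep track of the additional $-1$ scaling used to cast the threshold in the ``nice'' form. The algebraic identity for $\pi_{1,1}-\pi_{1,0}$ is routine (it is essentially the computation already carried out in the proof of the blinding lemma, where $\pi_{1,1}=\pi_{1,0}$ was shown to occur precisely at $\alpha=\frac{1}{P_{1,0}+P_{0,1}}$), so no new estimates are needed. I would also remark, for completeness, that when $\alpha=\frac{1}{P_{0,1}+P_{1,0}}$ the coefficient vanishes, $\pi_{1,1}=\pi_{1,0}$, and the received data carry no information, consistent with the blinding lemma; this is the boundary separating the regimes of Lemma~\ref{kopt} and the present lemma.
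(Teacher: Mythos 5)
Your proposal is correct and follows essentially the same route as the paper, which simply remarks that the lemma is proved as in Lemma~\ref{kopt} using the fact that for $\alpha>\frac{1}{P_{0,1}+P_{1,0}}$ one has $\pi_{1,1}<\pi_{1,0}$ and hence $\ln(\pi_{1,0}/\pi_{1,1})+\ln((1-\pi_{1,1})/(1-\pi_{1,0}))>0$. Your explicit identity $\pi_{1,1}-\pi_{1,0}=(P_d-P_f)\bigl(1-\alpha(P_{1,0}+P_{0,1})\bigr)$ and the careful sign bookkeeping are exactly the details the paper leaves implicit.
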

\begin{proof}
This can be proved similarly as Lemma~\ref{kopt} and using the fact that, for $\pi_{1,1}<\pi_{1,0}$ or equivalently, $\alpha>\dfrac{1}{P_{0,1}+P_{1,0}}$, $[\ln(\pi_{1,0}/\pi_{1,1})+\ln((1-\pi_{1,1})/(1-\pi_{1,0}))]>0$.
\end{proof}
The probability of false alarm $Q_F$ and the probability of detection $Q_D$ for this case can be calculated to be
 \begin{equation}
    \label{qf1}
     Q_{F} = \sum_{i = 0}^{\left \lfloor K^* \right \rfloor} \nchoosek{N}{i} (\pi_{1,0})^i (1-\pi_{1,0})^{N-i}
     \end{equation}
and
     \begin{equation}
     \label{qd1}
     Q_{D} = \sum_{i = 0}^{\left \lfloor K^* \right \rfloor} \nchoosek{N}{i} (\pi_{1,1})^i (1-\pi_{1,1})^{N-i}.
    \end{equation}
    
    Next, we analyze the property of $P_E$ with respect to Byzantine attacking strategy $(P_{1,0},P_{0,1})$ that enables us to find the optimal attacking strategies.

\begin{lemma}
            \label{Lemma-7}
            For a fixed local sensor threshold $\lambda$, assume that the FC employs the optimal fusion rule $\left \lceil K^* \right \rceil$, \footnote{Notice that, $K^*$ might not be an integer.} as given in \eqref{K1*}. Then, for $\alpha\leq 0.5$, the error probability $P_E$ at the FC is a monotonically increasing function of $P_{1,0}$ while $P_{0,1}$ remains fixed. Conversely, the error probability $P_E$ at the FC is a monotonically increasing function of $P_{0,1}$ while $P_{1,0}$ remains fixed.
          \end{lemma}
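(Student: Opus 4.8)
The plan is to reduce the claim, via the derivative formulas already established in the proofs of Lemma~\ref{Lemma-2} and Lemma~\ref{Lemma-3}, to a single sign condition on the exponent $r$, and then to verify that sign condition by eliminating the prior ratio using the defining equation of $K^*$. First I would record the structural fact that, since $\lceil K^*\rceil$ implements the MAP (Bayes) rule of Lemma~\ref{kopt}, the resulting error $P_E(\lceil K^*\rceil,\cdot)$ equals $\min_K P_E(K,\cdot)$: it is the pointwise minimum of the family of continuous functions $P_E(K,\cdot)$ over integer $K$, hence itself continuous in $P_{1,0}$ (and in $P_{0,1}$), and on each interval where $\lceil K^*\rceil$ is constant $=K$ it coincides with $P_E(K,\cdot)$ for that fixed $K$. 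On such an interval the formula \eqref{eqn6} applies with that $K$: $\frac{dP_E}{dP_{1,0}} = g(P_{1,0},K,\alpha)\bigl(e^{r(P_{1,0},K,\alpha)}-1\bigr)$ with $g\ge 0$, so it suffices to prove $r(P_{1,0},\lceil K^*\rceil,\alpha)\ge 0$.

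The algebraic heart of the argument is the following. Writing $a=\ln\frac{\pi_{1,1}}{\pi_{1,0}}>0$ and $b=\ln\frac{1-\pi_{1,0}}{1-\pi_{1,1}}>0$ (both positive because $\alpha\le 0.5$ forces $\pi_{1,1}>\pi_{1,0}$), the identity \eqref{K1*} reads $K^*(a+b)=\ln\frac{P_0}{P_1}+Nb$, so $\ln\frac{P_0}{P_1}=K^*a-(N-K^*)b$. Substituting this into the expression for $r(P_{1,0},K,\alpha)$ and collecting terms collapses it, with $K=\lceil K^*\rceil$, to
\[
r\bigl(P_{1,0},\lceil K^*\rceil,\alpha\bigr)=(K^*-K)(a+b)+a+\ln\frac{1-P_f}{1-P_d}.
\]
Since $\lceil K^*\rceil=K$ means $K^*-K\in(-1,0]$, we get $r\ge -(a+b)+a+\ln\frac{1-P_f}{1-P_d}=\ln\frac{1-P_f}{1-P_d}-b$, so it remains to check $\frac{1-\pi_{1,0}}{1-\pi_{1,1}}\le\frac{1-P_f}{1-P_d}$. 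Inserting the expressions \eqref{equ1}--\eqref{equ2} and cross-multiplying, this reduces after cancellation to $P_f(1-P_d)\le P_d(1-P_f)$, i.e.\ $P_f\le P_d$, which holds. Hence $r\ge 0$ on every piece, $\frac{dP_E}{dP_{1,0}}\ge 0$ there, and by the continuity noted above $P_E(\lceil K^*\rceil,\cdot)$ is nondecreasing in $P_{1,0}$.

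For the $P_{0,1}$ direction I would run the identical scheme starting from \eqref{eqn7}: $\frac{dP_E}{dP_{0,1}}=g(P_{0,1},K,\alpha)\bigl(e^{r(P_{0,1},K,\alpha)}-1\bigr)$ with $g\ge 0$. Eliminating $\ln\frac{P_1}{P_0}=-K^*a+(N-K^*)b$ via the same identity collapses the exponent to $r(P_{0,1},\lceil K^*\rceil,\alpha)=(K-K^*)(a+b)-a+\ln\frac{P_d}{P_f}$; since $K-K^*\in[0,1)$ this is at least $-a+\ln\frac{P_d}{P_f}=\ln\frac{P_d}{P_f}-\ln\frac{\pi_{1,1}}{\pi_{1,0}}$, and the residual inequality $\frac{\pi_{1,1}}{\pi_{1,0}}\le\frac{P_d}{P_f}$ again reduces, after substituting \eqref{equ1}--\eqref{equ2}, to $P_f\le P_d$. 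Thus $r\ge 0$ and $P_E$ is nondecreasing in $P_{0,1}$ as well.

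The step I expect to require the most care is not the algebra but the first, structural one: replacing the fixed $K$ of Lemmas~\ref{Lemma-2}--\ref{Lemma-3} by the data-dependent, integer-valued $\lceil K^*\rceil$ means the derivative identities hold only piecewise, and one must rule out downward jumps of $P_E$ at the points where $\lceil K^*\rceil$ changes. The clean resolution is precisely that $P_E(\lceil K^*\rceil,\cdot)$ is the pointwise minimum of the continuous family $\{P_E(K,\cdot)\}_K$, hence continuous everywhere, so a function that is continuous and has nonnegative one-sided derivatives on the interior of each piece of a locally finite partition is globally nondecreasing. One should also flag the mild caveat that $\alpha\le 0.5$ gives $\pi_{1,1}\ge\pi_{1,0}$ with equality only at the degenerate blinding corner $P_{1,0}=P_{0,1}=1$, $\alpha=0.5$, where $P_E$ is constant and the statement is trivially true.
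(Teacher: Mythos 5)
Your proof is correct and follows essentially the same route as the paper's: both reduce the claim via \eqref{eqn6}--\eqref{eqn7} to showing $r(\cdot,\lceil K^*\rceil,\alpha)>0$, and both settle this using the defining identity for $K^*$ together with the two inequalities $\frac{1-P_f}{1-P_d}>\frac{1-\pi_{1,0}}{1-\pi_{1,1}}$ and $\frac{P_d}{P_f}>\frac{\pi_{1,1}}{\pi_{1,0}}$ (your substitution of $\ln\frac{P_0}{P_1}=K^*a-(N-K^*)b$ into $r$ is just a rearrangement of the paper's sandwich $B<\lceil K^*\rceil<A$). Your explicit treatment of the piecewise/continuity issue at points where $\lceil K^*\rceil$ jumps is a welcome tightening of a step the paper handles only informally, but it is not a different argument.
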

\begin{proof}
Observe that, for a fixed $\lambda$, $P_E(\left \lceil K^* \right \rceil)$ is a continuous but not a differentiable function. However, the function is non differentiable only at a finite number (or infinitely countable number) of points because of the nature of $\left \lceil K^* \right \rceil$. 
Now observe that, for a fixed fusion rule $K$, $P_E(K)$ is differentiable. Utilizing this fact, to show that the lemma is true, we first find the condition that a fusion rule $K$ should satisfy so that $P_E$ is a monotonically increasing function of $P_{1,0}$ while keeping $P_{0,1}$ fixed (and vice versa) and later show that $\left \lceil K^* \right \rceil$ satisfies this condition. From \eqref{eqn6}, finding those $K$ that satisfy $\dfrac{d P_E}{d P_{1,0}}> 0$\footnote{Observe that, for $\alpha< 0.5$, the function $g\left( P_{1,0}, K^*, \alpha \right)=0$ (as given in \eqref{gfunction}) only under extreme conditions (i.e., $P_1=0$ or $P_d=0$ or $P_d=1$). Ignoring these extreme conditions, we have $g\left( P_{1,0}, K^*, \alpha \right)>0$.} is equivalent to finding those value of $K$ that make
\begin{eqnarray}
&&
r\left( P_{1,0}, K, \alpha \right)>0\nonumber\\ 
&\Leftrightarrow  &
\ln\dfrac{P_0}{P_1}\dfrac{1-P_f}{1-P_d}+(K-1)\ln \dfrac{\pi_{1,0}}{\pi_{1,1}}+(N-K)\ln \dfrac{1-\pi_{1,0}}{1-\pi_{1,1}}>0\nonumber \\
&\Leftrightarrow  &
K < \dfrac{\ln \dfrac{P_0}{P_1}+N\ln \dfrac{(1-\pi_{1,0})}{(1-\pi_{1,1})}+\ln\dfrac{1-P_f}{1-P_d}-\ln\dfrac{\pi_{1,0}}{\pi_{1,1}}} {\ln \left[\{\pi_{1,1}(1-\pi_{1,0})\}/\{ \pi_{1,0} (1-\pi_{1,1})\}\right]}.\label{k1}
\end{eqnarray}
Similarly, we can find the condition that a fusion rule $K$ should satisfy so that $P_E$ is a monotonically increasing function of $P_{0,1}$ while keeping $P_{1,0}$ fixed. From \eqref{eqn7}, finding those $K$ that satisfy $\dfrac{d P_E}{d P_{0,1}}> 0$ is equivalent to finding those $K$ that make
\begin{eqnarray}
&&
r\left( P_{0,1}, K, \alpha \right)>0\nonumber\\ 
&\Leftrightarrow  &
\ln\dfrac{P_1}{P_0}\dfrac{P_d}{P_f}+(K-1)\ln \dfrac{\pi_{1,1}}{\pi_{1,0}}+(N-K)\ln \dfrac{1-\pi_{1,1}}{1-\pi_{1,0}}>0 \nonumber\\
&\Leftrightarrow  &
K > \dfrac{\ln \dfrac{P_0}{P_1}+N\ln \dfrac{(1-\pi_{1,0})}{(1-\pi_{1,1})}+\ln\dfrac{P_f}{P_d}-\ln\dfrac{\pi_{1,0}}{\pi_{1,1}}} {\ln \left[\{\pi_{1,1}(1-\pi_{1,0})\}/\{ \pi_{1,0} (1-\pi_{1,1})\}\right]}.\label{k2}
\end{eqnarray}
From \eqref{k1} and \eqref{k2}, we have

{\small \begin{equation}
A=\dfrac{\ln \dfrac{P_0}{P_1}+N\ln \dfrac{(1-\pi_{1,0})}{(1-\pi_{1,1})}+\ln\dfrac{1-P_f}{1-P_d}-\ln\dfrac{\pi_{1,0}}{\pi_{1,1}}} {\ln \left[\{\pi_{1,1}(1-\pi_{1,0})\}/\{ \pi_{1,0} (1-\pi_{1,1})\}\right]}>K > \dfrac{\ln \dfrac{P_0}{P_1}+N\ln \dfrac{(1-\pi_{1,0})}{(1-\pi_{1,1})}+\ln\dfrac{P_f}{P_d}-\ln\dfrac{\pi_{1,0}}{\pi_{1,1}}} {\ln \left[\{\pi_{1,1}(1-\pi_{1,0})\}/\{ \pi_{1,0} (1-\pi_{1,1})\}\right]}=B.\label{k}
\end{equation}}
Next, we show that the optimal fusion rule $\left \lceil K^* \right \rceil$  given in \eqref{K1*} is within the region $(A,B)$. First we prove that $\left \lceil K^* \right \rceil>B$ by showing $K^*>B$. Comparing $K^*$ given in~\eqref{K1*} with $B$, $K^*>B$ iff 
\begin{equation}
\label{ratio_P_f_P_d}
0>\ln\dfrac{P_f}{P_d}-\ln\dfrac{\pi_{1,0}}{\pi_{1,1}}.
\end{equation}
Since $P_d>P_f$, to prove \eqref{ratio_P_f_P_d} we start from the inequality 
\begin{eqnarray*}
&&
\dfrac{(1-P_d)}{P_d}< \dfrac{(1-P_f)}{P_f}\\
&\Leftrightarrow  &
\dfrac{\alpha P_{1,0}(1-P_d)+P_d(1-P_{0,1}\alpha)}{P_d}<\dfrac{\alpha P_{1,0}(1-P_f)+P_f(1-P_{0,1}\alpha)}{P_f}\\
&\Leftrightarrow  &
\dfrac{\pi_{1,1}}{P_d}<\dfrac{\pi_{1,0}}{P_f}\\
&\Leftrightarrow  &
0>\ln\dfrac{P_f}{P_d}-\ln\dfrac{\pi_{1,0}}{\pi_{1,1}}.
\end{eqnarray*}

Now, we show that  $A>\left \lceil K^* \right \rceil$. Observe that,
\begin{eqnarray*}
&&
A>\left \lceil K^* \right \rceil\\
&\Leftrightarrow  &
\dfrac{\ln\dfrac{1-P_f}{1-P_d}-\ln\dfrac{\pi_{1,0}}{\pi_{1,1}}} {\ln \left[\{\pi_{1,1}(1-\pi_{1,0})\}/\{ \pi_{1,0} (1-\pi_{1,1})\}\right]}>\left \lceil K^* \right \rceil-K^*.
\end{eqnarray*}

Hence, it is sufficient to show that
\begin{equation*}
\dfrac{\ln\dfrac{1-P_f}{1-P_d}-\ln\dfrac{\pi_{1,0}}{\pi_{1,1}}} {\ln \left[\{\pi_{1,1}(1-\pi_{1,0})\}/\{ \pi_{1,0} (1-\pi_{1,1})\}\right]}>1>\left \lceil K^* \right \rceil-K^*.
\end{equation*}
$1>\left \lceil K^* \right \rceil-K^*$ is true from the property of the ceiling function. By \eqref{a1}, we have
\begin{eqnarray*}
&&
\dfrac{1-P_f}{1-P_d}>\dfrac{1-\pi_{1,0}}{1-\pi_{1,1}}\\
&\Leftrightarrow  &
\ln\dfrac{1-P_f}{1-P_d}>\ln\dfrac{1-\pi_{1,0}}{1-\pi_{1,1}}\\
&\Leftrightarrow  &
\ln\dfrac{1-P_f}{1-P_d}-\ln\dfrac{\pi_{1,0}}{\pi_{1,1}}>\ln \left[\{\pi_{1,1}(1-\pi_{1,0})\}/\{ \pi_{1,0} (1-\pi_{1,1})\}\right]\\
&\Leftrightarrow  &
\dfrac{\ln\dfrac{1-P_f}{1-P_d}-\ln\dfrac{\pi_{1,0}}{\pi_{1,1}}} {\ln \left[\{\pi_{1,1}(1-\pi_{1,0})\}/\{ \pi_{1,0} (1-\pi_{1,1})\}\right]}>1
\end{eqnarray*}
which completes the proof.
\end{proof}
Based on  Lemma~\ref{Lemma-7}, we present the optimal attacking strategies for the case when the FC has the knowledge regarding the strategies used by the Byzantines.
\begin{theorem}
\label{th3}
The optimal attacking strategies, $(P_{1,0}^*, P_{0,1}^*)$, which maximize the probability of error, $P_E(\left \lceil K^* \right \rceil)$, are given by
\[ (P_{1,0}^*, P_{0,1}^*)  \left\{ \begin{array}{rll}
				(p_{1,0}, p_{0,1})  & \mbox{if}\ \alpha>0.5 \\
			   (1,1)  & \mbox{if}\ \alpha\leq 0.5
				\end{array}\right.
\] 
where $(p_{1,0}, p_{0,1})$ satisfies $\alpha(p_{1,0}+p_{0,1})=1$.
\end{theorem}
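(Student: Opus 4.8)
The plan is to treat the two regimes $\alpha\le 0.5$ and $\alpha>0.5$ separately, since they are governed by different mechanisms. For $\alpha\le 0.5$ the result is essentially a corollary of Lemma~\ref{Lemma-7}: since $P_E(\lceil K^*\rceil)$ is nondecreasing in $P_{1,0}$ for every fixed $P_{0,1}$ and nondecreasing in $P_{0,1}$ for every fixed $P_{1,0}$, for any $(P_{1,0},P_{0,1})\in[0,1]^2$ we have $P_E(P_{1,0},P_{0,1})\le P_E(1,P_{0,1})\le P_E(1,1)$, so $(1,1)$ solves Problem~\eqref{opt-P4}. The only point needing care is the corner $(P_{1,0},P_{0,1})=(1,1)$ when $\alpha=0.5$ exactly, where the hypothesis $\alpha<1/(P_{1,0}+P_{0,1})$ underlying Lemma~\ref{Lemma-7} fails; but there $(1,1)$ itself satisfies $\alpha(p_{1,0}+p_{0,1})=1$, and, by the next paragraph, attains the universal maximum $\min(P_0,P_1)$, so the statement still holds and the two branches agree at $\alpha=0.5$.

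For $\alpha>0.5$ I would argue via a strategy-independent upper bound. Because the FC uses the Bayes-optimal fusion rule $\lceil K^*\rceil$ (given by \eqref{K1*} when $\alpha<1/(P_{1,0}+P_{0,1})$ and by \eqref{K2*} otherwise, with $Q_F,Q_D$ as in \eqref{qf}--\eqref{qd} or \eqref{qf1}--\eqref{qd1} accordingly), and since the admissible count-threshold family contains the constant decisions ($K\le 0$ always declaring $H_1$, $K\ge N+1$ always declaring $H_0$), its risk can never exceed that of the best prior-only decision: $P_E(\lceil K^*\rceil)\le\min(P_0,P_1)$ for \emph{every} $(\alpha,P_{1,0},P_{0,1})$. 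It then suffices to exhibit a feasible attack attaining this bound. Since $\alpha>0.5$ we have $1/\alpha\in[1,2)$, so $(p_{1,0},p_{0,1})=(1/(2\alpha),\,1/(2\alpha))\in[0,1]^2$ — and indeed any $(p_{1,0},p_{0,1})\in[0,1]^2$ with $\alpha(p_{1,0}+p_{0,1})=1$ — is feasible. By \eqref{equ1}--\eqref{equ2} (equivalently, by the blinding condition \eqref{initial-blind}--\eqref{blind}), such a choice forces $\pi_{1,1}=\pi_{1,0}$, hence $P(\mathbf{u}|H_0)=P(\mathbf{u}|H_1)$ for all $\mathbf{u}$; then $Q_D=Q_F$ for every fusion rule, the MAP rule degenerates to the prior-only decision, and the FC's best choice yields exactly $P_E(\lceil K^*\rceil)=\min(P_0,P_1)$. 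Thus every such blinding pair meets the upper bound and is therefore a maximizer, which proves the $\alpha>0.5$ branch.

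The $\alpha\le 0.5$ branch is routine given Lemma~\ref{Lemma-7}. The substance — and the step I expect to require the most care — is the $\alpha>0.5$ branch: making the ``the optimal fusion rule is never worse than the prior-only rule'' bound fully rigorous \emph{uniformly} over all attacks, i.e.\ confirming that $\lceil K^*\rceil$ in \eqref{K1*}/\eqref{K2*} is genuinely the Bayes-optimal rule and that the constant decisions sit inside the admissible family across both sign regimes of $\pi_{1,1}-\pi_{1,0}$, together with verifying that a blinding attack truly forces $\pi_{1,1}=\pi_{1,0}$ and hence $P_E(\lceil K^*\rceil)=\min(P_0,P_1)$. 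The only other delicate bookkeeping is the boundary $\alpha=0.5$, where the two branches meet and $(1,1)$ must be recognized simultaneously as the ``corner'' solution and as a member of the blinding family $\{\alpha(p_{1,0}+p_{0,1})=1\}$.
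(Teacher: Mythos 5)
Your proposal is correct and follows essentially the same route as the paper's proof: for $\alpha>0.5$ the blinding condition $\alpha(P_{1,0}+P_{0,1})=1$ is achievable and drives $P_E(\lceil K^*\rceil)$ to its universal ceiling $\min(P_0,P_1)$, while for $\alpha\le 0.5$ the coordinatewise monotonicity of Lemma~\ref{Lemma-7} forces the maximizer to the corner $(1,1)$. Your explicit justification of the upper bound $P_E(\lceil K^*\rceil)\le\min(P_0,P_1)$ and of the $\alpha=0.5$ boundary is a slightly more careful rendering of what the paper asserts directly, but it is not a different argument.
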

\begin{proof}
Note that, the maximum probability of error occurs when the posterior probabilities are equal to the prior probabilities of the hypotheses. That is,
\begin{equation}
\label{initial-blind1}
P(H_i|\mathbf{u})=P(H_i)\; \text{for}\; i=0,1.
\end{equation}
Now using the result from \eqref{blind}, the condition can be simplified to
\begin{equation}
\label{alpha_one}
\alpha (P_{1,0}+P_{0,1})=1.
\end{equation}
Eq. \eqref{alpha_one} suggests that when $\alpha\geq 0.5$, the attacker can find flipping probabilities that make $P_E=\min\{P_0,P_1\}$. When $\alpha=0.5$, $P_{1,0}=P_{0,1}=1$ is the optimal attacking strategy and when $\alpha>0.5$, any pair which satisfies $P_{1,0}+P_{0,1}=\dfrac{1}{\alpha}$ is optimal. However, when $\alpha<0.5$, \eqref{alpha_one} cannot be satisfied. In this case,  by Lemma~\ref{Lemma-7}, for $\alpha<0.5$, $(1,1)$ is an optimal attacking strategy, $(P_{1,0}, P_{0,1})$, which maximizes probability of error, $P_E(\left \lceil K^* \right \rceil)$.
\end{proof}
Next, to gain insight into Theorem~\ref{th3}, we present illustrative examples that corroborate our results.
\subsection{Illustrative Examples}
\begin{figure*}[t]
\centering
\subfigure[] {
\includegraphics[height=0.25\textheight, width=0.4\textwidth]{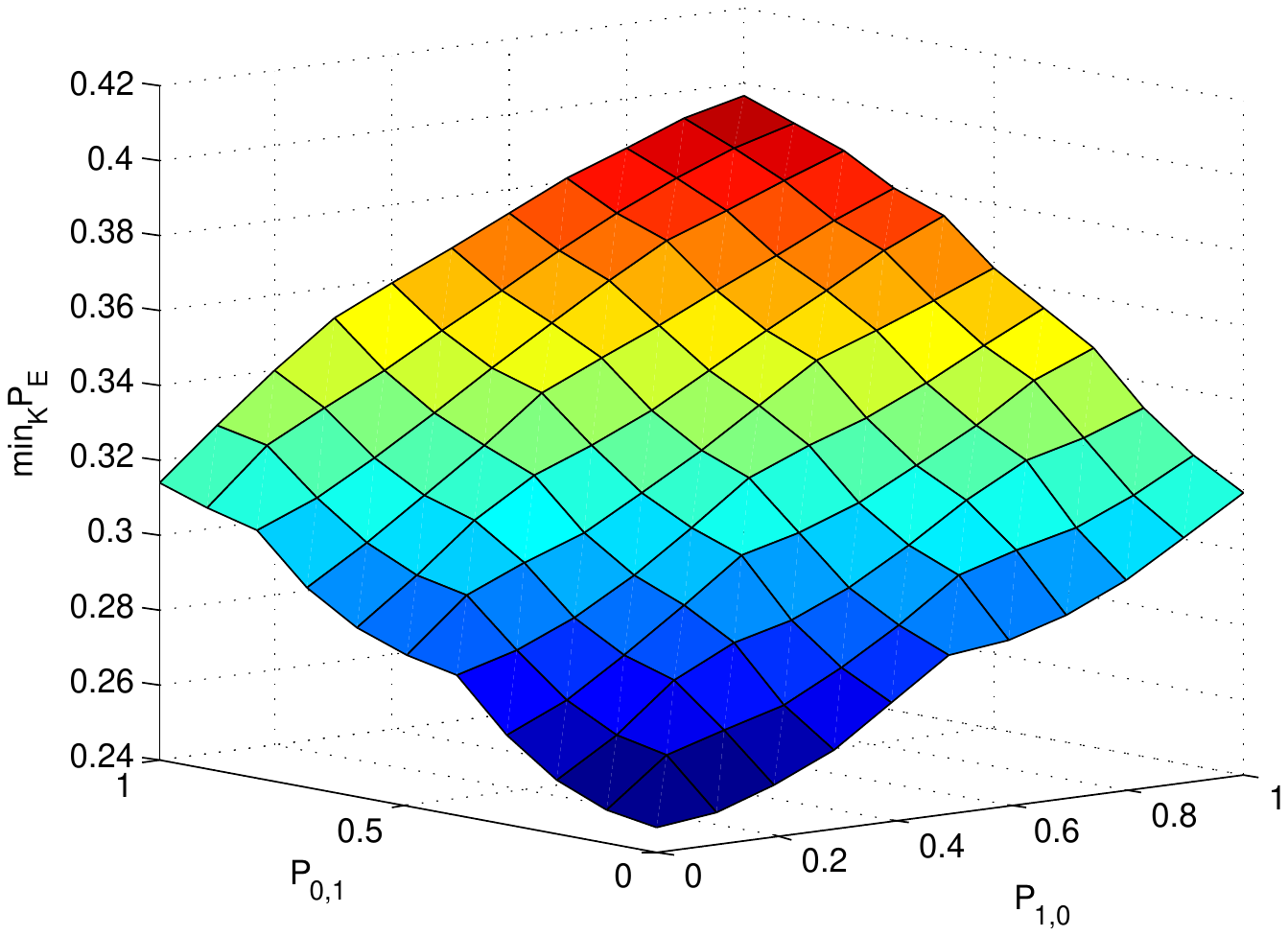}
\label{min} }
\subfigure[]{
\includegraphics[height=0.25\textheight, width=0.4\textwidth]{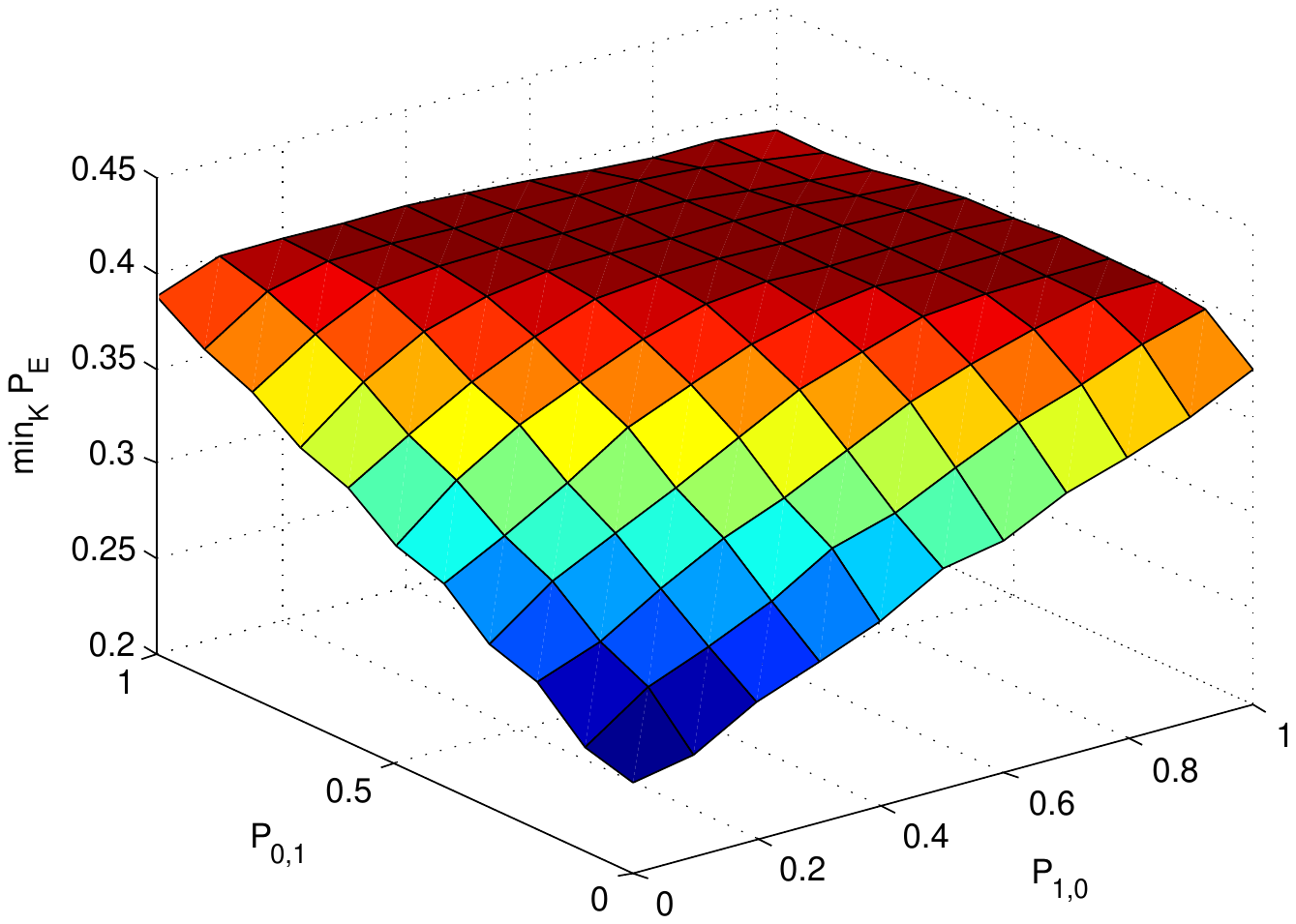}
\label{min1}}
\caption{Minimum probability of error ($\min_K P_E$) analysis. \subref{min} $\min_K P_E$ as a function of $(P_{1,0},P_{0,1})$ for $\alpha=0.4$. \subref{min1} $\min_K P_E$ as a function of $(P_{1,0},P_{0,1})$ for $\alpha=0.8$.}
\label{minpe}
\end{figure*}
In Figure~\ref{minpe}, we plot the minimum probability of error as a function of attacker's strategy $(P_{1,0},P_{0,1})$, where $P_E$ is minimized over all possible fusion rules $K$. We consider a $N=11$ node network, with the nodes' detection and false alarm probabilities being $0.6$ and $0.4$, respectively. Prior probabilities are assumed to be $P_0 = 0.4 $ and $P_1 = 0.6$.
Observe that, the optimal fusion rule as given in \eqref{K1*} changes with attacker's strategy $(P_{1,0},P_{0,1})$. Thus, the minimum probability of error $\min_K P_E$ is a non-differentiable function. It is evident from Figure~\ref{min} that $(P_{1,0},P_{0,1})=(1,1)$ maximizes the probability of error, $P_E(\left \lceil K^* \right \rceil)$. This corroborates our theoretical results presented in Theorem~\ref{th3},  that for $\alpha<0.5$, the optimal attacking strategy, $(P_{1,0}, P_{0,1})$, that maximizes the probability of error, $P_E(\left \lceil K^* \right \rceil)$, is $(1,1)$.

In Figure~\ref{min1} we consider the scenario where $\alpha=0.8$ (i.e., $\alpha>0.5$). It can be seen that the attacking strategy $(P_{1,0},P_{0,1})$, that maximizes $\min_K P_E$ is not unique in this case. It can be verified that any attacking strategy which satisfies $P_{1,0}+P_{0,1}=\frac{1}{0.8}$ will make $\min_K P_E=\min\{P_0,P_1\}=0.4$. This corroborates our theoretical results presented in Theorem~\ref{th3}.\\
Observe that the results obtained for this case are consistent with the results obtained for the asymptotic case. This is because the optimal fusion rule is used at the FC and the asymptotic performance measure (i.e., Chernoff information) is the exponential decay rate of error probability of the ``optimal detector'', and thus, implicitly assumes that the optimal fusion rule is used at the FC.

When the attacker does not have the knowledge of the fusion rule $K$ used at the FC,  from an attacker's perspective, maximizing its local probability of error $P_e$ is the optimal attacking strategy. The optimal attacking strategy in this case is either of the three possibilities: $(P_{1,0},P_{0,1})=(0,1)\;\text{or}\;(1,0)\;\text{or}\;(1,1)$ (see Table~\ref{tableop}). However, the FC has knowledge of the attacking strategy $(\alpha,P_{1,0},P_{0,1})$ and thus,  uses the optimal fusion rule as given in~\eqref{K1*} and ~\eqref{K2*}.

\section{Conclusion and Future Work}
\label{sec7}
We considered the problem of distributed Bayesian detection with Byzantine data, and characterized the power of attack analytically. 
For distributed detection for a binary hypothesis testing problem, the expression for the minimum attacking power above which the ability to detect is completely destroyed was obtained. We showed that when there are more than $50\%$ of Byzantines in the network,
the data fusion scheme becomes blind and no detector can achieve any performance gain over the one based just on priors. The optimal attacking strategies for Byzantines that degrade the performance at the FC were obtained. It was shown that the results obtained for the non-asymptotic case are consistent with the results obtained for the asymptotic case only when the FC has the knowledge of the attacker's
strategies, and thus, uses the optimal fusion rule. However, results obtained for the non-asymptotic
case, when the FC does not have knowledge of attacker's strategies, are not the same as the results obtained for the asymptotic case. There are still many interesting questions that
remain to be explored in the future work such as an analysis of
the scenario where Byzantines can also control sensor thresholds used for making local decisions. Other questions
such as the case where Byzantines collude in several groups (collaborate) to degrade the detection performance can also be investigated.

\section*{Acknowledgment}
This work was supported in part by ARO under Grant W911NF-14-1-0339, AFOSR under Grant FA9550-10-1-0458 and National Science Council of Taiwan, under grants NSC 99-2221-E-011-158 -MY3, NSC 101-2221-E-011-069 -MY3. Han's work was completed during his visit to Syracuse University from 2012 to 2013.

\appendices

\section{Proof of $\dfrac{d r\left( P_{1,0}, K^*, \alpha \right) }{d P_{1,0}} > 0$}
\label{proof4}
Differentiating both sides of $r\left( P_{1,0}, K^*, \alpha \right)$ with respect to $P_{1,0}$, we get
\begin{equation*}
       \frac{d r\left( P_{1,0}, K^*, \alpha \right) }{d P_{1,0}} =  (K^*-1)\alpha\left(\dfrac{1-P_f}{\pi_{1,0}}-\dfrac{1-P_d}{\pi_{1,1}}\right)-(N-K^*)\alpha\left(\dfrac{1-P_f}{1-\pi_{1,0}}-\dfrac{1-P_d}{1-\pi_{1,1}}\right).
\end{equation*} 
In the following we show that
            \begin{equation}
            \label{dr} 
            \dfrac{d r\left( P_{1,0}, K^*, \alpha \right) }{d P_{1,0}} > 0
            \end{equation}
i.e., $r\left( P_{1,0}, K^*, \alpha \right)$ is non-decreasing.  
Observe that in the above equation,
\begin{equation}
\label{equality}
\dfrac{(1-P_f)}{\pi_{1,0}}>\dfrac{(1-P_d)}{\pi_{1,1}}.
\end{equation}
To show that the above condition is true, we start from the inequality
\begin{eqnarray}
&&
P_d>P_f\\
&\Leftrightarrow&
\dfrac{P_d}{1-P_d}>\dfrac{P_f}{1-P_f}\\
&\Leftrightarrow&
\alpha P_{1,0}+(1-P_{0,1}\alpha)\dfrac{P_d}{1-P_d}>\alpha P_{1,0}+(1-P_{0,1}\alpha) \dfrac{P_f}{1-P_f}\\
&\Leftrightarrow&
\dfrac{\alpha P_{1,0}(1-P_d)+P_d(1-P_{0,1}\alpha)}{(1-P_d)}>\dfrac{\alpha P_{1,0}(1-P_f)+P_f(1-P_{0,1}\alpha)}{(1-P_f)}\\
&\Leftrightarrow&
\dfrac{\pi_{1,1}}{(1-P_d)}>\dfrac{\pi_{1,0}}{(1-P_f)}\\\label{a1}
&\Leftrightarrow&
\dfrac{(1-P_f)}{\pi_{1,0}}>\dfrac{(1-P_d)}{\pi_{1,1}}
\end{eqnarray}
Similarly, it can be shown that 
\begin{eqnarray}
&&\dfrac{1-\pi_{1,1}}{1-P_d}> \dfrac{1-\pi_{1,0}}{1-P_f}\label{greater1}
\end{eqnarray}
Now from \eqref{equality} and \eqref{greater1}, to show that  $\dfrac{d r\left( P_{1,0}, K^*, \alpha \right) }{d P_{1,0}}>0$ is equivalent to show that 
\begin{equation}
\label{majcon1}
 (K^*-1)\left(\dfrac{1-P_f}{\pi_{1,0}}-\dfrac{1-P_d}{\pi_{1,1}}\right)>(N-K^*)\left(\dfrac{1-P_f}{1-\pi_{1,0}}-\dfrac{1-P_d}{1-\pi_{1,1}}\right)
\end{equation}
Next, we consider two different cases, first when there are odd number of nodes in the network and second when there are even number of nodes in the network.\\
\textbf{Odd Number of Nodes:} When there are odd number of nodes in the network, the majority fusion rule is $K^*=(N+1)/2$. In this case \eqref{majcon1} is equivalent to show that 
\begin{equation}
\left(\dfrac{N-1}{2}\right)\left(\dfrac{1-P_f}{\pi_{1,0}}-\dfrac{1-P_d}{\pi_{1,1}}\right)>\left(\dfrac{N-1}{2}\right)\left(\dfrac{1-P_f}{1-\pi_{1,0}}-\dfrac{1-P_d}{1-\pi_{1,1}}\right). \label{evnmajo1}
\end{equation}
To show that the above condition is true, we start from the following inequality 
\begin{eqnarray*}
&&
\dfrac{(1-\pi_{1,0})(1-\pi_{1,1})}{\pi_{1,0}\pi_{1,1}}>-1\\
&\Leftrightarrow&
\left[\dfrac{1}{\pi_{1,0}}-\dfrac{1}{\pi_{1,1}}\right]>\left[\dfrac{1}{1-\pi_{1,0}}-\dfrac{1}{1-\pi_{1,1}}\right]\\
&\Leftrightarrow&
\left[\dfrac{1}{\pi_{1,0}}-\dfrac{1}{1-\pi_{1,0}}\right]>\left[\dfrac{1}{\pi_{1,1}}-\dfrac{1}{1-\pi_{1,1}}\right]
\end{eqnarray*}
Since $\dfrac{1-P_f}{1-P_d}>1$, $\pi_{1,0}<0.5$ (consequence of our assumption) and $N\geq 2$, the above condition is equivalent to 
\begin{eqnarray}
&&
\dfrac{1-P_f}{1-P_d}\left[\dfrac{1}{\pi_{1,0}}-\dfrac{1}{1-\pi_{1,0}}\right]>\left[\dfrac{1}{\pi_{1,1}}-\dfrac{1}{1-\pi_{1,1}}\right]\nonumber \\
&\Leftrightarrow&
\left(\dfrac{1-P_f}{\pi_{1,0}}-\dfrac{1-P_d}{\pi_{1,1}}\right)>\left(\dfrac{1-P_f}{1-\pi_{1,0}}-\dfrac{1-P_d}{1-\pi_{1,1}}\right)\nonumber \\
&\Leftrightarrow&
\left(\dfrac{N-1}{2}\right)\left(\dfrac{1-P_f}{\pi_{1,0}}-\dfrac{1-P_d}{\pi_{1,1}}\right)>\left(\dfrac{N-1}{2}\right)\left(\dfrac{1-P_f}{1-\pi_{1,0}}-\dfrac{1-P_d}{1-\pi_{1,1}}\right)  
\end{eqnarray}
which implies that $\dfrac{dr\left( P_{1,0}, K^*, \alpha \right)}{d P_{1,0}}>0$ for odd number of nodes case. Next, we consider the even number of nodes case.\\
\textbf{Even Number of Nodes:} Now, we consider the case when there are even number of nodes in the network and majority fusion rule is given by $K^*=\dfrac{N}{2}+1$. Condition \eqref{majcon1} is equivalent to show that
 \begin{eqnarray*}
   &&
\left(\dfrac{N}{2}\right)\left(\dfrac{1-P_f}{\pi_{1,0}}-\dfrac{1-P_d}{\pi_{1,1}}\right)>\left(\dfrac{N}{2}-1\right)\left(\dfrac{1-P_f}{1-\pi_{1,0}}-\dfrac{1-P_d}{1-\pi_{1,1}}\right).  \end{eqnarray*}
 Which follows from the fact that 
 \begin{equation*}
 \left(\dfrac{N}{2}\right)\left(\dfrac{1-P_f}{\pi_{1,0}}-\dfrac{1-P_d}{\pi_{1,1}}\right)>\left(\dfrac{N}{2}-1\right)\left(\dfrac{1-P_f}{\pi_{1,0}}-\dfrac{1-P_d}{\pi_{1,1}}\right)
  \end{equation*} 
 and the result given in \eqref{evnmajo1}. This completes our proof.

\section{Calculating partial derivative of $P_E$ w.r.t. $P_{1,0}$}
\label{proof3} 
First, we calculate the partial derivative of $Q_F$ with respect to $P_{1,0}$. Notice that,
\begin{equation}
\label{derqf}
     Q_{F} = \sum_{i = K^*}^{N} \nchoosek{N}{i} (\pi_{1,0})^i (1-\pi_{1,0})^{N-i}
     \end{equation}
where
\begin{eqnarray}
\pi_{1,0}&=&\alpha(P_{1,0}(1-P_f)+(1-P_{0,1})P_f)+(1-\alpha)P_f\\\label{conddist}
(\pi_{1,0})'&=& d\pi_{1,0}/dP_{1,0} =\alpha(1-P_f).
\end{eqnarray}  
Differentiating both sides of \eqref{derqf} with respect to $P_{1,0}$, we get           
\begin{eqnarray*}
\dfrac{d Q_F}{d P_{1,0}}&=& \nchoosek{N}{K^*}(K^*(\pi_{1,0})^{K^*-1}(\pi_{1,0})'(1-\pi_{1,0})^{N-K^*}-(\pi_{1,0})^{K^*}(N-K^*)(1-\pi_{1,0})^{N-K^*-1}(\pi_{1,0})')\\      &+&\nchoosek{N}{K^*+1}((K^*+1)(\pi_{1,0})^{K^*}(\pi_{1,0})'(1-\pi_{1,0})^{N-K^*-1}-(\pi_{1,0})^{K^*+1}(N-K^*-1)\\
&(& 1-\pi_{1,0})^{N-K^*-2}(\pi_{1,0})') + \cdots +\nchoosek{N}{N}(N(\pi_{1,0})^{N-1}(\pi_{1,0})'-0)\\           
&=&(\pi_{1,0})'(\pi_{1,0})^{K^*-1}(1-\pi_{1,0})^{N-K^*}\Bigg[\nchoosek{N}{K^*}\left(K^*-\dfrac{\pi_{1,0}}{1-\pi_{1,0}}(N-K^*)\right)\\
 &+&\nchoosek{N}{K^*+1}\left((K^*+1)\dfrac{\pi_{1,0}}{1-\pi_{1,0}}-(N-K^*-1)\left(\dfrac{\pi_{1,0}}{1-\pi_{1,0}}\right)^2\right)+\cdots\Bigg]\\ 
&=&(\pi_{1,0})'(\pi_{1,0})^{K^*-1}(1-\pi_{1,0})^{N-K^*}\Bigg[\nchoosek{N}{K^*}(K^*-\dfrac{\pi_{1,0}}{1-\pi_{1,0}}(N-K^*))\\
&+&\dfrac{\pi_{1,0}}{1-\pi_{1,0}}\nchoosek{N}{K^*+1}\left((K^*+1) -(N-K^*-1)\dfrac{\pi_{1,0}}{1-\pi_{1,0}}\right)+\cdots\Bigg]\\
&=&(\pi_{1,0})'(\pi_{1,0})^{K^*-1}(1-\pi_{1,0})^{N-K^*}\Bigg[\nchoosek{N}{K^*}K^*+\Bigg[-\dfrac{\pi_{1,0}}{1-\pi_{1,0}}\nchoosek{N}{K^*}(N-K^*)\\
\end{eqnarray*}
\begin{equation*}
+\dfrac{\pi_{1,0}}{1-\pi_{1,0}}\nchoosek{N}{K^*+1}(K^*+1)\Bigg]+\cdots\Bigg]
\end{equation*}
Since, $\nchoosek{N}{K^*}\dfrac{K^*}{N}=\nchoosek{N-1}{K^*-1}$, the above equation can be written as
\begin{eqnarray}
\dfrac{d Q_F}{d P_{1,0}}&=&
 (\pi_{1,0})'(\pi_{1,0})^{K^*-1}(1-\pi_{1,0})^{N-K^*} \Bigg[\nchoosek{N-1}{K^*-1}N \nonumber \\
  &+&\dfrac{\pi_{1,0}}{1-\pi_{1,0}}\Bigg\{\nchoosek{N}{K^*+1}(K^*+1)-\nchoosek{N}{K^*}(N-K^*)\Bigg\}+\cdots\Bigg].\label{mech2}
\end{eqnarray}
 Notice that, for any positive integer $t$
 \begin{equation}
 \label{mech1}
\left(\dfrac{\pi_{1,0}}{1-\pi_{1,0}}\right)^{t}\left[\nchoosek{N}{K^*+t}(K^*+t)-\nchoosek{N}{K^*+t-1}(N-K^*-t+1)\right]=0. 
 \end{equation}
Using the result from \eqref{mech1}, \eqref{mech2} can be written as
 \begin{eqnarray*}
 &&
  \dfrac{dQ_F}{dP_{1,0}}=(\pi_{1,0})'(\pi_{1,0})^{K^*-1}(1-\pi_{1,0})^{N-K^*}\left[\nchoosek{N-1}{K^*-1}N+\dfrac{\pi_{1,0}}{1-\pi_{1,0}}[0]+\cdots+[0]\right]\\
&\Leftrightarrow&
 \dfrac{d Q_F}{d P_{1,0}}= \alpha(1-P_f) N \nchoosek{N-1}{K^*-1} \left( \pi_{1,0} \right)^{K^*-1} \left( 1-\pi_{1,0} \right)^{N-K^*}  .       
 \end{eqnarray*}
 
Similarly, the partial derivative of $Q_D$ w.r.t. $P_{1,0}$ can calculated to be
\begin{equation*}
\dfrac{d Q_D}{d P_{1,0}}= \alpha(1-P_d) N \nchoosek{N-1}{K^*-1} \left( \pi_{1,1} \right)^{K^*-1} \left( 1-\pi_{1,1} \right)^{N-K^*}.      
\end{equation*}

\bibliographystyle{IEEEtran}
\bibliography{Conf,Book,Journal}

\end{document}